\newtheorem{thm}{Theorem}[section]
\newtheorem{cor}[thm]{Corollary}
\newtheorem{lem}[thm]{Lemma}
\newtheorem{definition}[thm]{Definition}
\newenvironment{defn}{\begin{definition}\rm}{\end{definition}}
\newtheorem{example}[thm]{Example}
\newtheorem{remark}[thm]{Remark}
\renewcommand{\u}{\mathbf{u}}
\renewcommand{\v}{\mathbf{v}}
\newcommand{\s}{\mathbf{s}}
\newcommand{\y}{\mathbf{y}}
\newcommand{\x}{\mathbf{x}}
\newcommand{\z}{\mathbf{z}}
\newcommand{\supp}[1]{\textrm{supp}(\mathbf{#1})}
\newcommand{\ScrC}{\mathscr{C}}
\newcommand{\auth}{\text{auth}}
\def\BibTeX{{\rm B\kern-.05em{\sc i\kern-.025em b}\kern-.08em
    T\kern-.1667em\lower.7ex\hbox{E}\kern-.125emX}}
\begin{document}

\title{Authenticated partial correction over AV-MACs: \\ toward characterization and coding\\
\thanks{This material is based upon work supported by the National Science Foundation under Grant No. 2107488.}
}

\author{\IEEEauthorblockN{Duncan Koepke, Michaela Schnell, Madelyn St.Pierre, Allison Beemer}
\IEEEauthorblockA{\textit{Department of Mathematics},
\textit{University of Wisconsin-Eau Claire}\\
Eau Claire, WI, United States}
}

\maketitle

\begin{abstract}
In this paper we study $\gamma$ partial correction over a $t$-user arbitrarily varying multiple-access channel (AV-MAC). We first present necessary channel conditions for the $\gamma$ partially correcting authentication capacity region to have nonempty interior. We then give a block length extension scheme which preserves positive rate tuples from a short code with zero probability of $\gamma$ partial correction error, noting that the flexibility of $\gamma$ partial correction prevents pure codeword concatenation from being successful. Finally, we offer a case study of a particular AV-MAC satisfying the necessary conditions for partial correction.
\end{abstract}

\begin{IEEEkeywords}
arbitrarily varying multiple-access channel, capacity region, authentication, partial correction
\end{IEEEkeywords}


\section{Introduction}

An arbitrarily varying multiple-access channel (AV-MAC) combines random noise with adversarial action over a channel with multiple senders and a single receiver. 
Classical communication over AV-MACs has been studied in a variety of works, with \cite{J81,G90,AC99,PS19} focusing on the capacity region in the two-user case. The combination of these works establish the communication capacity region, notably showing that the region has nonempty interior if and only if the channel does not have a set of channel \textit{symmetrizability} properties. Symmetrizability, defined for point-to-point AVCs in \cite{CN88}, indicates that the adversary can reliably trick the receiver into decoding in error.

While symmetrizability characterizes the communication capacity of an AVC, the analogous condition of \textit{overwritability} governs the (keyless) \textit{authentication} capacity of such a  channel \cite{KK18}. Overwritability indicates that an adversary is not only able to trick the receiver into an erroneous message estimate, but that they are able to do so \textit{while remaining undetected}.
In \cite{BGKKY20}, Beemer et al. formalize an extension of {overwritability} to the AV-MAC, in a similar vein to the extension of symmetrizability for communication. They show that the capacity region for (keyless) authentication over an AV-MAC is equal to that for communication with no adversary, provided that the channel is not overwritable to any degree.

Other work related to authentication over an AV-MAC includes work on MACs with byzantine users. In \cite{SBDP19,SBDP23-2}, Sangwan et al. consider a byzantine user in a two-user MAC, proving results on the authenticated communication capacity region. Indeed, the inner bound of the authentication capacity region for two users over an AV-MAC in \cite{BGKKY20} was accomplished by fixing the byzantine user in an an extension to three users of the scheme of \cite{SBDP19}. In general, however, the question of a byzantine user in a MAC is distinct from an AV-MAC, where the adversary's identity is known a priori.

In the present work, we give the extensions of authentication results from \cite{BGKKY20} to the case of an arbitrary number of users, then quickly turn our attention the idea of \textit{$\gamma$ partial correction}. Partial correction over an AV-MAC was introduced in \cite{BGKKY20} to bridge the gap between total correction and authentication. In contrast to pure authentication, $\gamma$ partial correction requires that a $\gamma$ fraction of users' messages be decoded correctly, even if the remainder be discarded. The particular users who are decoded accurately may change with each transmission: that is, the subset of users to be decoded is not fixed ahead of time. When $\gamma=0$, this reduces to authentication, while with $\gamma = 1$ the goal becomes classical communication. 

To our knowledge, partial correction over an AV-MAC has only been studied in \cite{BGKKY20}. However, we note that there may be a connection to list decoding over AV-MACs (see e.g. \cite{N13,C16}), wherein partial correction would require that elements in the output list match on a certain number of users. In \cite{BGKKY20}, the authors focus on the two-user case, and give some initial results showing that it is possible for a channel to have  a $\gamma$ partially correcting authentication capacity region with nonempty interior. Here, we extend these results to an arbitrary number of users. We give a set of necessary symmetrizability/overwritability conditions for $\gamma$ partial correction, present a case study for a particular channel satisfying these conditions, and provide a scheme to extend block length that preserves positive rate tuples.

Necessary background and notation is introduced in Section \ref{sec:prelims}. In Section \ref{sec:cap_region}, we present a set of necessary channel conditions for partial correction capacity regions with nonempty interior. Section \ref{sec:extension_scheme} discusses a general method for extending the block length of a short block length code tuple with desirable partial correction properties, and Section \ref{sec:case_study} provides a case study of the construction of these short codes for a particular channel. Section \ref{sec:conclusion} concludes the paper.


\section{Preliminaries}
\label{sec:prelims}

Let $[n]:=\{1,2,\ldots,n\}$, and let  $\supp{x}\subseteq [n]$ denote the support of a length-$n$ vector $\x$. Capital letters (e.g. $X$) will denote random variables, script letters the alphabets they are taken from ($\mathcal{X}$), and lower case letters their realizations ($x$).

Our setting will be a $t$-user AV-MAC, where $t\geq 2$. More specifically, a $t$-user AV-MAC is defined by a distribution $W_{Y|X_1\cdots X_t S}$, where legitimate channel inputs $X_{j}$ are taken from alphabet $\mathcal{X}_{j}$ for each $j\in [t]$, the adversary's choice of channel \textit{state} is $S\in \mathcal{S}$, and the channel output is given by $Y\in \mathcal{Y}$. In our model, we assume the adversary has full knowledge of the channel statistics and all user encoding strategies, but that $S$ is independent of the particular message sequence transmitted in any given time instance.
We begin by extending the definitions of \cite{BGKKY20}.

\begin{defn}
An \textit{$(M_{1},\ldots,M_{t},n)$ authentication code} for a $t$-user AV-MAC is given by encoders $f_{1},\ldots , f_{t}$ and decoder $\phi$:
\begin{align}
    f_{i}&: [M_{j}] \to \mathcal{X}_{j}^{n}, \ 1\leq j\leq t \\
    \phi&: \mathcal{Y}^{n} \to \left([M_{1}]\cup \{0\}\right) \times \cdots \times \left([M_{t}]\cup \{0\}\right), \label{eqn:decoder}
\end{align}
where an output of ``0'' in any coordinate indicates adversarial interference.
\end{defn}

We will sometimes directly discuss the codebooks $C_j = f_j(M_j) \subseteq \mathcal{X}_{j}^{n}$ in later sections. 
In this paper, we will be concerned primarily with correcting some portion of the users' messages, even if others must be discarded due to adversarial interference.

\begin{defn}
\label{def:partial_corr}
Let $\gamma \in (0,1)$. We say that an $(M_{1},\ldots,M_{t},n)$ authentication code for a $t$-user AV-MAC is \textit{$\gamma$ partially correcting} if, with high probability in $n$, we can correct at least $\lceil \gamma t\rceil$ of the $t$ messages.
\end{defn}

We observe that the case where $\gamma=0$ reduces to the classical notion of authentication for an AV-MAC \`{a} la \cite{SBDP19,BGKKY20,SBDP23-2}, while $\gamma=1$ bridges the gap to total correction of all user messages. The use of the open interval in Definition \ref{def:partial_corr} excludes the cases where no messages are corrected, or all are; neither of these is \textit{partial} correction. It is straightforward that if an authentication code is $\gamma$ partially correcting, then it is $\lambda$ partially correcting for all $0<\lambda<\gamma$.

Let $\phi^{-1}(A)\subseteq \mathcal{Y}^{n}$ represent the set of channel outputs which decode to some element $(i_{1},\ldots,i_{t})$ in the set $A$ under the decoder $\phi$, and let $\phi^{-1}(A)^{c}$ be the complement in $\mathcal{Y}^{n}$ of this set. Let $\x_{j}(i):=f_{j}(i)$ denote the length-$n$ encoding of message $i$ by user $j$.
Correspondingly, we let $\mathbf{i}$ denote a tuple of transmitted messages from $[M_1]\times\cdots\times [M_t]$, and $\mathbf{x}(\mathbf{i})$ its encoding under $(f_1, f_2, \ldots f_t)$.
Given a tuple of transmitted messages, $\mathbf{i}$, and adversarial state $\s$, where $\s=\s_{0}$ denotes that the no-adversary state sequence, we define the \textit{probability of
$\gamma$ partial correction error}  for the authentication code $(f_{1},\ldots,f_{t},\phi)$ by:
\begin{equation}
    e_{\gamma}(\mathbf{i},\s_0)=W(\phi^{-1}(\{\mathbf{i}\})^{c}\mid \x(\mathbf{i}), \s_0),
\end{equation}
and, when $\s\neq\s_0$,
\begin{equation}
    e_{\gamma}(\mathbf{i},\s)=W(\phi^{-1}(A_\mathbf{i})^{c}\mid \x(\mathbf{i}), \s),
\end{equation}
where $A_\mathbf{i}=\{\hat{\mathbf{i}}: \hat{i}_{j}\in\{0,i_{j}\} \text{ for }  j\in[t], |\supp{\hat{\mathbf{i}} }|\geq \lceil\gamma t\rceil\}$. That is, $A_\mathbf{i}$ is the set of decoded sequences that match sent message tuple $\mathbf{i}$ on every nonzero entry, and have at least a $\gamma$ fraction of nonzero entries.
We will assume that each message in $[M_{1}]\times \cdots \times [M_{t}]$ is transmitted with equal probability, so that the average probability of error for a given adversarial choice of $\s$ is:
\begin{equation}
\label{eqn:avg_error_general}
e_{\gamma}(\s) = \frac{1}{M_{1}\cdots M_{t}}\sum_{\mathbf{i}}e_{\gamma}(\mathbf{i},\s).
\end{equation} 
We say that a rate tuple $(R_{1},\ldots,R_{t})\in \mathbb{R}_{\geq 0}^{t}$ is \textit{achievable for $\gamma$ partial correction} if there exists a sequence of $(2^{R_{1}n},\ldots,2^{R_{t}n},n)$ codes such that
$ \max_{\s}e_{\gamma}(\s)$ approaches $0$ with increasing block length $n$. As in a point-to-point AVC or the two-user AV-MAC case, $\text{argmax}_{\s}e_{\gamma}(\s)$ is the adversary's best chance of inducing a decoding error. 

The ($t$-dimensional) \textit{authentication capacity region} $\mathscr{C}_{\text{auth}}$ and the \textit{$\gamma$ partially correcting authentication capacity region}, $\mathscr{C}_{\text{auth},\gamma}$, are the closures of the sets of achievable rate tuples for each respective goal, where the former is realized when $\gamma = 0$. Let $\mathscr{C}$ denote the communication capacity region in the no-adversary setting (i.e., $\s=\s_{0}$, $\gamma =1$). We say that a capacity region has \textit{nonempty interior} if it contains a point such that all coordinate values are positive.

Critical to authentication and partial correction are the concepts of \textit{symmetrizability} \cite{CN88} and \textit{overwritability} \cite{KK18}: channel conditions which determine whether a channel is amenable to these types of communication. Below, we give extensions to the original point-to-point definitions to a $t$-user AV-MAC:

\begin{defn}
\label{defn:symm-MAC}
Let $t\geq 2$ and $m\in[t]$. A $t$-user AV-MAC $W_{Y|X_{1}\cdots X_{t}S}$ (denoted by $W$) is \textit{$X_{i_{1}}\times \cdots\times X_{i_{m}}$-symmetrizable} if there exists $P:=P_{S|X_{i_{1}}\cdots X_{i_{m}}}$ such that for all $x_{i_{1}},\ldots,x_{i_{m}}, x_{i_1}',\ldots,x_{i_{m}}', y$,
\begin{align*}
 \sum_{s}P(s\mid x_{i_1}',\ldots,&x_{i_m}')W(y|x_{i_1},\ldots,x_{i_m},s)= \\
  \sum_{s}P&(s\mid x_{i_1},\ldots,x_{i_m})W(y|x_{i_1}',\ldots,x_{i_m}',s).   
\end{align*} 
\end{defn}

The case $t=2$ results in the symmetrizability conditions of 
\cite{G90}, which along with \cite{J81,AC99} showed that (lack of) symmetrizability completely characterizes when the AV-MAC communication capacity region $\mathscr{C}$ has (non)empty interior.

\begin{defn}
\label{defn:overwrit-MAC}
Let $t\geq 2$ and $m\in [t]$. A $t$-user AV-MAC $W_{Y|X_{1}\cdots X_{t} S}$ (denoted by $W$) is \textit{$X_{i_{1}}\times \cdots\times X_{i_{m}}$-overwritable} if there exists $P:=P_{S|X_{i_{1}}\cdots X_{i_{m}}}$ such that for all $x_{i_{1}},\ldots,x_{i_{m}}, x_{i_1}',\ldots,x_{i_{m}}', y$,
\begin{align*}
 \sum_{s}P(s\mid x_{i_1}',\ldots,x_{i_m}')W(y|x_{i_1},& \ldots,x_{i_m},s)= \\ 
 & W(y|x_{i_1}',\ldots,x_{i_m}',s_0).   
\end{align*} 

\end{defn}

Again, the case of $t=2$ reduces to previous results: it was shown in \cite{BGKKY20} that (lack of) overwritability completely characterizes when the authentication capacity region $\mathscr{C}_{\text{auth}}$ has (non)empty interior.

{
For brevity, we will say that a channel is \textit{$m$-symmetrizable} (resp., \textit{-overwritable}) if there exists some subset of $m$ users $i_{1},\ldots,i_{m}$ such that the channel is $X_{i_{1}}\times \cdots\times X_{i_{m}}$-symmetrizable (resp., overwritable).}


\section{Necessary conditions for nonempty interior}
\label{sec:cap_region}

Previous work completely classified the authentication capacity region $\mathscr{C}_{\text{auth}}$ for the case of two users, and established necessary conditions for nonempty interior of the $\gamma=0.5$ partially correcting authentication capacity region $\mathscr{C}_{\text{auth},0.5}$ in the same setting \cite{BGKKY20}. In this section, we extend these results to more than two users. Because the authentication rate region is not the primary topic of this paper, and the results extend in a straightforward way to more users, we omit the following proof pertaining to $\ScrC_\auth$; 
this result extends Lemma III.6 and Theorem III.7 of \cite{BGKKY20}.

\begin{thm}
\label{thm:auth_empty}
    A $t$-user AV-MAC is $m$-overwritable for some $m\in [t]$ if and only if $\mathscr{C}_{\text{auth}}$ has empty interior. Otherwise, $\mathscr{C}_{\text{auth}}=\mathscr{C}$.
\end{thm}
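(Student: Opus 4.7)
The plan is to prove the two implications of the biconditional separately. The inclusion $\mathscr{C}_{\text{auth}} \subseteq \mathscr{C}$ is automatic because any authentication code with $e_{0}(\s_{0})\to 0$ is in particular a communication code for the adversary-free channel, so the ``not overwritable'' half of the theorem only requires the reverse inclusion.

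First I would show that $m$-overwritability for some $m\in[t]$ forces $\mathscr{C}_{\text{auth}}$ to have empty interior. Suppose $W$ is $X_{u_{1}}\times\cdots\times X_{u_{m}}$-overwritable with witness $P=P_{S\mid X_{u_{1}}\cdots X_{u_{m}}}$ for a subset $U=\{u_{1},\ldots,u_{m}\}\subseteq[t]$, and assume for contradiction that some rate tuple with every coordinate positive is achievable. For $n$ large we then have a code $(f_{1},\ldots,f_{t},\phi)$ with $|C_{u}|\geq 2$ for each $u\in U$ and $\max_{\s}e_{0}(\s)$ arbitrarily small. The adversary samples auxiliary messages $\tilde{\mathbf{m}}=(\tilde{m}_{u})_{u\in U}$ uniformly from $\prod_{u\in U}[M_{u}]$ independently of the honest $\mathbf{i}$, then for each coordinate $\ell\in[n]$ independently draws $s_{\ell}\sim P(\,\cdot\mid \tilde{x}_{u_{1}}[\ell],\ldots,\tilde{x}_{u_{m}}[\ell])$ with $\tilde{x}_{u}=f_{u}(\tilde{m}_{u})$. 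Applying the overwritability identity coordinate-wise, with the remaining users' symbols held at their honest values on both sides, shows that the $\s$-averaged output distribution given $(\mathbf{i},\tilde{\mathbf{m}})$ equals the no-adversary distribution for the tuple $\mathbf{i}''$ that agrees with $\mathbf{i}$ outside $U$ and has $\tilde{m}_{u}$ in coordinate $u$ for each $u\in U$. Reliability at $\s_{0}$ then forces $\phi$ to output $\mathbf{i}''$ with probability $1-o(1)$; but with probability $1-\prod_{u\in U}M_{u}^{-1}\to 1$ over the independent uniform draw, $\mathbf{i}''\notin A_{\mathbf{i}}$, which constitutes an authentication error. Averaging over all the randomness and then fixing the realization achieving at least the expected error produces a deterministic $\s$ with $e_{0}(\s)$ bounded away from zero, contradicting achievability.

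For the converse direction, I would extend the two-user scheme of \cite{BGKKY20} to arbitrary $t$. The idea is to start from a communication code for a target rate tuple in the interior of $\mathscr{C}$ and layer a joint-typicality hypothesis test onto the decoder. For each nonempty $I\subseteq[t]$, the failure of $|I|$-overwritability at $I$ asserts that no $P_{S\mid X_{I}}$ can simultaneously match the honest output law on all pairs of $X_{I}$-inputs, yielding a uniform separation between output types induced by honest transmission on $I$ and those induced by any overwrite attack on $I$. The decoder first produces a tentative estimate $\hat{\mathbf{i}}$ via the inner communication code, then for each nonempty $I$ runs the separation test and zeroes the coordinates of $\hat{\mathbf{i}}$ in $I$ upon rejection; completeness follows from strong typicality under honest transmission, while soundness follows from the overwritability failure uniformly in the adversary's $P$. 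The hardest part will be the bookkeeping of the $2^{t}-1$ simultaneous subset tests and calibrating thresholds so that no honest transmission is flagged on any $I$ while every overwrite attack on every $I$ is detected; a union bound over $I$ together with exponential decay of atypical-type probabilities should close this, mirroring the two-user argument but with constants depending carefully on $t$.
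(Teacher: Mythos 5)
The paper never actually proves this theorem: it is stated with the remark that the argument ``extends in a straightforward way'' from Lemma III.6 and Theorem III.7 of \cite{BGKKY20}, so your proposal can only be measured against that intended route. Your forward direction (overwritability implies empty interior) is correct and essentially complete: the randomized overwrite attack, the coordinate-wise application of Definition \ref{defn:overwrit-MAC} with the non-targeted users' symbols held fixed on both sides, the observation that the averaged output law equals the no-adversary law for $\mathbf{i}''$, and the final derandomization to a single state sequence independent of the transmitted messages are exactly the $t$-user version of the two-user argument. (As a bonus, since your bound only needs $M_u\geq 2$ for users $u$ in the overwritable set $U$, it actually shows the projection of $\mathscr{C}_{\text{auth}}$ onto the coordinates of $U$ has empty interior, in the spirit of the remark following Theorem \ref{thm:gamma_empty_int}.)

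The gap is in the converse, which is where all of the difficulty of the theorem lives. Your separation argument quantifies only over memoryless overwrite attacks $P_{S\mid X_I}$ applied i.i.d.\ across coordinates, but the adversary in this model selects an arbitrary state sequence $\s\in\mathcal{S}^n$: it need not be generated by any conditional law on codeword symbols, need not be i.i.d., and need not target any particular subset $I$. To make your typicality test sound you must show that \emph{every} $\s$ either leaves the tentative estimate correct or is detected; the standard way to do this is a joint-type argument in the style of Csisz\'ar--Narayan, showing that if the received word is simultaneously ``explainable'' as honest transmission of some $\hat{\mathbf{i}}\neq\mathbf{i}$ under $s_0$ and as $\x(\mathbf{i})$ plus some state sequence, then the empirical conditional type of $\s$ given the disagreeing codeword symbols is an approximate overwriting distribution, contradicting the compactness-based uniform separation you invoke. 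None of this appears in your sketch, and ``calibrating thresholds \ldots should close this'' is precisely the part that requires proof; there is also no argument that the full region $\mathscr{C}$, rather than merely some interior point, survives the added tests. Note finally that \cite{BGKKY20} does not argue this way: the achievability there is obtained by reducing to the byzantine-user MAC scheme of \cite{SBDP19}, treating the adversary as an additional byzantine sender, so your converse, if completed, would be a genuinely different --- but at present unsubstantiated --- route to $\mathscr{C}_{\text{auth}}=\mathscr{C}$.
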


The following result on $\ScrC_{\auth, \gamma}$ can be seen by observing that any $\gamma$ partially correcting authentication code is simultaneously an authentication code. The result extends Lemma IV.2. of \cite{BGKKY20}.

\begin{lem}
\label{lem:gamma_contained_auth}
For any $t$-user AV-MAC and $\gamma \in (0,1)$, $\mathscr{C}_{\text{auth},\gamma} \subseteq \mathscr{C}_{\text{auth}}$.
\end{lem}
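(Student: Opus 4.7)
The plan is to observe that the $\gamma$ partial correction error criterion is strictly more stringent than the pure authentication error criterion, so every achievable rate tuple for the former is achievable for the latter; containment of closures then gives the desired inclusion of capacity regions.

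First, I would compare the decoding events side by side. In the no-adversary state $\s_{0}$, both definitions demand that the decoder output exactly the transmitted tuple $\mathbf{i}$, so the two errors coincide there. For $\s \neq \s_{0}$, the partial correction target set
\[
A_{\mathbf{i}} = \{\hat{\mathbf{i}}: \hat{i}_{j}\in\{0,i_{j}\} \text{ for all } j\in[t],\ |\supp{\hat{\mathbf{i}}}|\geq \lceil\gamma t\rceil\}
\]
imposes the additional constraint $|\supp{\hat{\mathbf{i}}}|\geq \lceil \gamma t\rceil$ compared with its pure-authentication counterpart
\[
B_{\mathbf{i}} = \{\hat{\mathbf{i}}: \hat{i}_{j}\in\{0,i_{j}\} \text{ for all } j\in[t]\},
\]
which is precisely the set of outputs that avoid an authentication error when $\mathbf{i}$ is sent. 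Since $A_{\mathbf{i}} \subseteq B_{\mathbf{i}}$, we have $\phi^{-1}(A_{\mathbf{i}})^{c} \supseteq \phi^{-1}(B_{\mathbf{i}})^{c}$, which immediately yields the pointwise inequality $e_{\gamma}(\mathbf{i},\s) \geq e_{\text{auth}}(\mathbf{i},\s)$ for every $\mathbf{i}$ and every $\s$.

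Averaging over $\mathbf{i}$ and maximizing over $\s$ preserves this inequality, so $\max_{\s} e_{\gamma}(\s) \geq \max_{\s} e_{\text{auth}}(\s)$. In particular, any sequence of $(2^{R_{1}n},\ldots,2^{R_{t}n},n)$ codes whose worst-case $\gamma$ partial correction error vanishes also has vanishing worst-case authentication error. Hence every rate tuple achievable for $\gamma$ partial correction is achievable for authentication, and since $\mathscr{C}_{\text{auth}}$ is closed by definition, passing to closures yields $\mathscr{C}_{\text{auth},\gamma} \subseteq \mathscr{C}_{\text{auth}}$.

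There is no genuine obstacle here: the argument is essentially definitional, relying only on the observation that partial correction forces the decoder to commit to a correct message estimate on a prescribed fraction of coordinates, whereas authentication is always free to flag every coordinate as adversarially corrupted. Any real difficulty would lie in establishing a reverse containment, which is not claimed by this lemma.
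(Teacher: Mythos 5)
Your argument is correct and is exactly the paper's (one-line) justification, elaborated: the paper simply observes that any $\gamma$ partially correcting authentication code is simultaneously an authentication code, which is precisely your pointwise comparison $A_{\mathbf{i}} \subseteq B_{\mathbf{i}}$ giving $e_{\gamma}(\mathbf{i},\s) \geq e_{\text{auth}}(\mathbf{i},\s)$ and hence containment of achievable rate tuples. Nothing further is needed.
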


Theorem \ref{thm:auth_empty} and Lemma \ref{lem:gamma_contained_auth} together imply that non-$m$-overwritability (for all $m\in[t]$) is a necessary condition for $\mathscr{C}_{\text{auth},\gamma}$ to have nonempty interior.
Next, we give another necessary condition for nonempty interior of $\mathscr{C}_{\text{auth},\gamma}$. Namely, the channel can only be symmetrizable up to the number of users we need \textit{not} correct to achieve $\gamma$ partial correction. While this result extends Theorem IV.3 of \cite{BGKKY20}, its proof contains more subtlety than the two-user case, so we include it here in full.

\begin{thm}
\label{thm:gamma_empty_int}
Let $\gamma\in (0,1)$. If a $t$-user AV-MAC $W_{Y|X_{1}\cdots X_{t}S}$ is $m$-symmetrizable for any $m > t-\lceil \gamma t\rceil $, then $\mathscr{C}_{\text{auth},\gamma}$ has empty interior. 
\end{thm}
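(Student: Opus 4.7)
The plan is to argue by contradiction, following the Csisz\'ar--Narayan-style argument for average-error AVCs adapted to the partial correction setting. Suppose the channel is $X_{i_1}\times\cdots\times X_{i_m}$-symmetrizable via a distribution $P$ with $m>t-\lceil\gamma t\rceil$, and toward contradiction assume a sequence of $(2^{R_1 n},\ldots,2^{R_t n},n)$ codes with every $R_j>0$ achieves $\max_\s e_\gamma(\s)\to 0$. Let $I=\{i_1,\ldots,i_m\}$, $\bar I=[t]\setminus I$, and $\Pi_I=\prod_{k\in I}[M_k]$; the hypothesis gives $|\bar I|=t-m<\lceil\gamma t\rceil$, and for $n$ large every $M_j\geq 2$, so $|\Pi_I|\geq 2$. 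For any tuple $\mathbf{i}$, write $\mathbf{i}_I$ and $\mathbf{i}_{\bar I}$ for its restrictions to the respective coordinate sets.

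The combinatorial core, specific to partial correction, is a disjointness statement: if $\mathbf{i}$ and $\mathbf{i}'$ agree on $\bar I$ but differ on every coordinate of $I$, then $A_\mathbf{i}\cap A_{\mathbf{i}'}=\emptyset$, since any $\hat{\mathbf{i}}$ in the intersection would satisfy $\hat i_k\in\{0,i_k\}\cap\{0,i'_k\}=\{0\}$ for every $k\in I$, forcing $\supp{\hat{\mathbf{i}}}\subseteq\bar I$ and hence $|\supp{\hat{\mathbf{i}}}|\leq t-m<\lceil\gamma t\rceil$, violating membership in $A_\mathbf{i}$. Pair this with the adversary's randomized attack $\pi(\mathbf{q})$ indexed by a spoof $\mathbf{q}\in\Pi_I$, in which each $S_k$ is drawn independently from $P(\,\cdot\mid \x_{i_1}(q_{i_1})_k,\ldots,\x_{i_m}(q_{i_m})_k)$. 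Applying the symmetrizability identity coordinate by coordinate yields that the $n$-letter output distribution under ``true $\mathbf{i}$, attack $\pi(\mathbf{q})$'' coincides with that under ``true $\mathbf{i}'$, attack $\pi(\mathbf{i}_I)$'' whenever $\mathbf{i}'=(\mathbf{i}_{\bar I},\mathbf{q})$ and $\mathbf{q}\neq\mathbf{i}_I$. Combined with the disjointness, the two $\gamma$-partial-correction success probabilities under this common output distribution sum to at most $1$, giving the pairwise bound $E_S[e_\gamma(\mathbf{i},\pi(\mathbf{q}))]+E_{S'}[e_\gamma(\mathbf{i}',\pi(\mathbf{i}_I))]\geq 1$.

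Finally I amplify this pairwise bound to a lower bound on the average error. Summing over all $(\mathbf{i},\mathbf{q})$ with $\mathbf{q}\neq\mathbf{i}_I$, the involution $(\mathbf{i},\mathbf{q})\mapsto((\mathbf{i}_{\bar I},\mathbf{q}),\mathbf{i}_I)$ on this set exchanges the two error terms, producing $\sum_{(\mathbf{i},\mathbf{q}):\,\mathbf{q}\neq\mathbf{i}_I} E_S[e_\gamma(\mathbf{i},\pi(\mathbf{q}))]\geq M(|\Pi_I|-1)/2$, where $M=M_1\cdots M_t$. Hence the expected average error under the randomized adversary ``draw $\mathbf{q}$ uniformly from $\Pi_I$, then apply $\pi(\mathbf{q})$'' is at least $(|\Pi_I|-1)/(2|\Pi_I|)\geq 1/4$. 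Since $\max_\s e_\gamma(\s)$ dominates the expected error of any randomized state sequence, this contradicts $\max_\s e_\gamma(\s)\to 0$. The step I expect to require the most care is restricting the adversary's spoof to lie in $\Pi_I$: randomizing over the full message space would contribute an extra factor of $\prod_{k\in\bar I}M_k$ in the denominator and wash out the bound as $n\to\infty$, whereas the restriction keeps the ``paired'' fraction $(|\Pi_I|-1)/|\Pi_I|$ bounded away from $0$ uniformly in $n$.
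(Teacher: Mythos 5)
Your argument is, in substance, the paper's own proof: the same randomized adversary that draws a spoof $\mathbf{q}$ uniformly over the symmetrizable coordinates and samples the state from $P(\cdot\mid \x(\mathbf{q}))$, the same coordinatewise symmetrizability swap, and the same combinatorial observation that $A_\mathbf{i}$ and $A_{\mathbf{i}'}$ are disjoint because $m>t-\lceil\gamma t\rceil$ forces any admissible support to meet $I$. The only substantive difference is cosmetic: you amplify via a pairing involution to get the constant $1/4$, whereas the paper sums the complementary success probabilities directly and lets the bound tend to $1/2$.

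There is, however, one step that is not justified as written. Your disjointness lemma requires $\mathbf{i}$ and $\mathbf{i}'$ to differ on \emph{every} coordinate of $I$ (only then does $\hat i_k\in\{0,i_k\}\cap\{0,i_k'\}=\{0\}$ hold for all $k\in I$), but in the amplification you apply the resulting pairwise bound $E_S[e_\gamma(\mathbf{i},\pi(\mathbf{q}))]+E_{S'}[e_\gamma(\mathbf{i}',\pi(\mathbf{i}_I))]\geq 1$ to every pair with $\mathbf{q}\neq\mathbf{i}_I$ as a tuple, i.e.\ differing in at least one coordinate. For a $\mathbf{q}$ that agrees with $\mathbf{i}_I$ on a nonempty subset of $I$, the sets $A_\mathbf{i}$ and $A_{\mathbf{i}'}$ can genuinely intersect (take $\hat{\mathbf{i}}$ equal to $\mathbf{i}$ off the disagreement set $D$ and zero on $D$; if $|D|\leq t-\lceil\gamma t\rceil$ this lies in both), so both success probabilities can be near $1$ and the pairwise bound fails. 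Consequently the count $M(|\Pi_I|-1)/2$ is not established. The repair is immediate and is exactly what the paper does: restrict the sum to spoofs with $q_k\neq i_k$ for all $k\in I$, of which there are $\prod_{k\in I}(M_k-1)$ per $\mathbf{i}$; the involution still preserves this set, and the expected average error is then at least $\tfrac{1}{2}\prod_{k\in I}\bigl(1-M_k^{-1}\bigr)$, which is bounded away from zero (and tends to $\tfrac12$) since every $M_k\geq 2$ grows with $n$. With that correction your proof goes through and yields the same conclusion, including the stronger statement that the projection onto the $m$ symmetrizable coordinates has empty interior.
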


\begin{proof}
Let $\gamma \in (0,1)$. Suppose $W:=W_{Y|X_{1}\cdots X_{t}S}$ is $X_{i_1}\times \cdots \times X_{i_{m}}$-symmetrizable, where $m > t-\lceil \gamma t\rceil $. 
Without loss of generality, let $i_j=j$, so that the coordinates in question are the first $m$,  and let $P:=P_{S|X_{1}\cdots X_{m}}$ be an adversarial distribution satisfying the property of Definition \ref{defn:symm-MAC}. 
Consider a sequence of $(M_{1},\ldots,M_{t},n)$ codes, with $M_{j}:=2^{R_{j}n}$ where $R_j>0$ for $j\in [t]$.
Let $\mathbf{x}(\mathbf{i})$ be the encoding of message vector $\mathbf{i}$, and let $\mathbf{v}_a^b$ denote coordinates $a$ through $b$ of a vector $\mathbf{v}$. Define
$A_\mathbf{i}:=\{ \mathbf{\hat{i}} : \hat{i}_{j}\in\{0,i_{j}\} \text{ for } j\in[t], |\supp{\hat{\mathbf{i}} }|\geq \lceil\gamma t\rceil\}$, as in Section \ref{sec:prelims}. Finally, define $M:=(M_1\cdots M_m)(M_1\cdots M_t)$.
Then, $\max_{\s}e_{\gamma}(\s)$ is bounded below by the expected value of $e_{\gamma}(\s)$ over $S$:
\begin{align}
&\geq \sum_{\s}\left(\frac{1}{M_1\cdots M_m}\sum_{\mathbf{i}_{1}^{m}}P(\s \mid \mathbf{x}(\mathbf{i})_{1}^{m})\right)e_{\gamma}(\s) \\
    &= \frac{1}{M}\sum_{\mathbf{i}_{1}^{m},\mathbf{k},\s}P(\s \mid \mathbf{x}(\mathbf{i})_{1}^{m})e_{\gamma}(\mathbf{k},\s) \label{eqn:def_error}
    \end{align}
    \begin{align}
        &\geq \frac{1}{M}\sum_{\mathbf{i}_{1}^{m},\mathbf{k},\s}P(\s \mid \mathbf{x}(\mathbf{i})_{1}^{m})W(\phi^{-1}(A_\mathbf{k})^{c}\mid \x(\mathbf{k}), \s) \label{eqn:def_channel}\\
            &= \frac{1}{M}\sum_{\mathbf{i}_{1}^{m},\mathbf{k},\s}P(\s \mid \mathbf{x}(\mathbf{k})_{1}^{m})W(\phi^{-1}(A_\mathbf{k})^{c}\mid \x(\mathbf{i}_1^m\mathbf{k}_{m+1}^{t}), \s) \label{eqn:symm_bad}
\end{align}
where Equations \eqref{eqn:def_error} and \eqref{eqn:def_channel} follow by definition, and \eqref{eqn:symm_bad} from symmetrizability. 

Now, we consider the sets $A_{\mathbf{i}_1^m\mathbf{k}_{m+1}^{t}}$ and $A_\mathbf{k}$. If $i_j\neq k_j$ for all $j\in [m]$, then these two sets are disjoint: indeed, any decoded message tuple with support of size at least $\lceil \gamma t\rceil$ must contain at least one coordinate from the first $m$ (recall that $m > t-\lceil \gamma t\rceil$). In other words, $\phi^{-1}(A_{\mathbf{i}_1^m\mathbf{k}_{m+1}^{t}})\subseteq \phi^{-1}(A_\mathbf{k})^{c}$ when $i_j\neq k_j$ for all $j\in [m]$. Using that $R_{j}>0$, and thus that the set of $\mathbf{i}_{1}^{m}$'s such that $i_j\neq k_j$ for fixed $\mathbf{k}$ is nonempty,
\begin{align}
&\geq   \frac{1}{M}\sum_{\substack{\mathbf{k},\s\\\mathbf{i}_{1}^{m}: {i}_{j}\neq k_{j}}} P(\s \mid \mathbf{x}(\mathbf{k})_{1}^{m})\cdot \\
& \quad \quad \quad\quad \quad\quad \quad W(\phi^{-1}(A_{\mathbf{i}_1^m\mathbf{k}_{m+1}^{t}})\mid \x(\mathbf{i}_1^m\mathbf{k}_{m+1}^{t}), \s)\\
    &=  \frac{1}{M}\sum_{\substack{\mathbf{k},\s\\\mathbf{i}_{1}^{m}: {i}_{j}\neq k_{j}}}P(\s \mid \mathbf{x}(\mathbf{k})_{1}^{m})\left(1-e_{\gamma}(\mathbf{i}_1^m\mathbf{k}_{m+1}^{t},\mathbf{s})\right) \\
        &=  \frac{1}{M}\sum_{\substack{\s, \mathbf{k}_{1}^{m}}}P(\s \mid \mathbf{x}(\mathbf{k})_{1}^{m})\sum_{\substack{\mathbf{i}_{1}^{m}\mathbf{k}_{m+1}^{t}:\\
        i_j\neq k_j}}\left(1-e_{\gamma}(\mathbf{i}_1^m\mathbf{k}_{m+1}^{t},\mathbf{s})\right) \\
            &\geq \frac{1}{M}\sum_{\s, \mathbf{k}_{1}^{m}}P(\s \mid \mathbf{x}(\mathbf{k})_{1}^{m})\left({\prod_{a=1}^{m}(M_a-1)\prod_{b=m+1}^{t}M_{b}}-e_{\gamma}(\s)\right)\label{eqn:use_pos_rate}\\
                &\geq \left(\frac{\prod_{a=1}^{m}(M_a-1)\prod_{b=m+1}^{t}M_{b}}{M_1\cdots M_t}-\max_{\s}e_{\gamma}(\s)\right)\frac{\sum_{ \mathbf{k}_{1}^{m}}1}{M_1\cdots M_m}\\
                    &=  \frac{\prod_{a=1}^{m}(M_a-1)\prod_{b=m+1}^{t}M_{b}}{M_1\cdots M_t}-\max_{\s}e_{\gamma}(\s)
\end{align}
Altogether,
\begin{equation}
    \max_{s}e_{\gamma}(\s)\geq  \frac{\prod_{a=1}^{m}(M_a-1)\prod_{b=m+1}^{t}M_{b}}{2M_1\cdots M_t}. 
\end{equation}
The lower bound approaches 0.5 in $n$ given that $R_j>0$ for $j\in [m]$, bounding $\max_{s}e_{\gamma}(\s)$ away from zero. We conclude that it is not possible that all $R_j$'s, $j\in [m]$, were positive. Thus, $\mathscr{C}_{\text{auth},\gamma}$ has empty interior.
\end{proof}

As in the two user case, it is possible that a channel is not overwritable in any sense, but is $m$-symmetrizable for some $m > t - \lceil \gamma t \rceil$; in this case, Theorem \ref{thm:gamma_empty_int}  tells us that $\mathscr{C}_{\text{auth},\gamma}$ has empty interior, {even while $\mathscr{C}_{\text{auth}}=\mathscr{C}$ may not. 
Furthermore, the proof of Theorem \ref{thm:gamma_empty_int} shows something slightly stronger than what is stated in the theorem:
 the {projection} of $\mathscr{C}_{\text{auth},\gamma}$ onto any $m$ symmetrizable coordinates ($m>t-\lceil \gamma t\rceil$) must have empty interior.


\section{Block Length Extension Scheme}
\label{sec:extension_scheme}
In this section, we present a method for extending the block length of a $\gamma$ partially correcting authentication code whose probability of $\gamma$ partial correction error is equal to zero. We note that unlike the classical ($\gamma = 1$) message correction case, simple concatenation of such a code will not automatically achieve the same rate as that of the original codebook: this is due to the fact that the particular $\lceil \gamma t \rceil$ users whose messages can be corrected in each time instance may vary. To adapt to our scenario, we use a concatenated code with the inner code tuple equal to a $\gamma$ partially correcting code with probability of $\gamma$ partial correction error equal to zero, and outer codes $C_{\varepsilon,j}$ given by a point-to-point codes designed for an induced erasure channel with a power constrained adversary. This induced channel is described in detail later in this section. A simplified version of this scheme appears in \cite{BGKKY20} for a particular two-user AV-MAC (a channel that is extended to more users in the case study of Section \ref{sec:case_study}). Here, we extend the scheme to the case of an arbitrary number of users.
We note that our scheme is not channel-dependent beyond the assumption that such an inner code exists.

 Suppose we have a set of $t$ codebooks, each of block length $n$, that have correction capability $\gamma:=\frac{u}{t}$ (with zero probability of $\gamma$ partial correction error).\footnote{For fixed $t$ and integer $1\leq u<t$, we take $\frac{u-1}{t}\leq \gamma<\frac{u}{t}$ and ``round'' it to $\frac{u}{t}$. This will not affect the number of users correctable due to the ceiling function on $\lceil \gamma t\rceil$.} If each user concatenates $r$ codewords from their codebook, at least $u$ users are correctable in any given time instance, while the remainder will be deemed to be in erasure. Notice that at most $t-u$ users experience erasure in each of the $r$ time instances.
Our outer codes, $C_{\varepsilon,j}$, must protect against these erasures for at least $u$ of the users. To gain an understanding of how this induced erasure channel functions, consider the following example:

\begin{example}
    \label{example: extending blocklength}
    Consider a set of three codebooks that have partial correction capability $\gamma=\frac{2}{3}$ and block length $n$, where 
        $C_i:=f_i(M_i)=\{c_{i1},c_{i2}\} $
    for $i\in[3]$.
    To extend the block length, we will use outer codes $C_{\varepsilon,j}\subseteq \mathbb{F}_{2}^{6}$. For example, let $C_{\varepsilon,j}=\{100101, 011101, 000101, 111010\}$ for all $j\in[3]$.
    Let 0 be replaced by the first codeword in each of the $C_j$'s and 1 be replaced by the second codeword in each of the $C_j$'s. For example the first codebook would become 
    \begin{align*}
        C_1' =\{c_{12} c_{11} & c_{11} c_{12} c_{11} c_{12}, \ c_{11} c_{12} c_{12} c_{12} c_{11} c_{12}, \\
        &  c_{11} c_{11} c_{11} c_{12} c_{11} c_{12}, \ c_{12} c_{12} c_{12} c_{11} c_{12} c_{11}\}
    \end{align*}
    with block length $6n$ and rate $ \frac{2}{6n} = \frac{1}{3n}$. 
    Because the $C_j$'s can correct $\gamma t =2$ of the three users, there will be a maximum of one erasure in each time instance.
    An example erasure pattern is given below, where an erasure is represented by $\varepsilon$. 
    Note that with this erasure pattern, we can still correct two of the three users' messages, and thus achieve the goal of $\gamma$ partial correction.

     \begin{center}
    \begin{tabular}{ c c c c c c c }
     \textbf{User 1} & \underline{ $\varepsilon$ } & \underline{ $\varepsilon$ } & \underline{ $\varepsilon$ } & \underline{ $\varepsilon$ }  & \underline{ $\varepsilon$ } & \underline{ $\varepsilon$ }\\ 
     
     \textbf{User 2} & \underline{$c_{22}$} & \underline{$c_{21}$} & \underline{$c_{21}$} & \underline{$c_{22}$} & \underline{$c_{21}$} & \underline{$c_{22}$}  \\  
     
     \textbf{User 3} & \underline{$c_{31}$} & \underline{$c_{32}$} & \underline{$c_{32}$} & \underline{$c_{32}$} & \underline{$c_{31}$} & \underline{$c_{31}$}
     
    \end{tabular}
    \end{center}

\end{example}

The above erasure pattern example suggests that the adversary's best strategy will be to spread their efforts across enough users, but not any more than needed, in order to deter $\gamma$ partial correction. Intuitively, the adversary should choose to target $t-u+1$ users to have the most erasures per affected user while not leaving $u$ users with zero erasures. 
The optimal strategy is formalized in the proof of the following lemma.

\begin{lem}
\label{lem:max_erasures}
Let $t\geq 2$ and $1\leq u <t$. If a $\gamma = \frac{u}{t}$ partially correcting codebook tuple (with error probability zero) is concatenated $r$ times, at least $u$ users will experience at most $\lfloor\frac{r(t-u)}{t-u+1}\rfloor$ total erasures.
\end{lem}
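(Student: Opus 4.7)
The plan is to prove this by a counting argument: I will show that the adversary has a bounded total erasure budget and then apply a pigeonhole-style contradiction. First, I would observe that because the inner codebook tuple is a $\gamma = u/t$ partially correcting code with zero error probability, in each of the $r$ time instances at least $u$ of the $t$ users are decoded correctly and hence at most $t-u$ users are flagged as erased. Summing over all $r$ instances, the adversary can induce a total of at most $r(t-u)$ erasures across all $t$ users.

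Next, I would argue by contradiction. Suppose that strictly fewer than $u$ users experience at most $\lfloor r(t-u)/(t-u+1)\rfloor$ erasures; equivalently, there exist at least $t-u+1$ users each of whom experiences strictly more than $\lfloor r(t-u)/(t-u+1)\rfloor$ erasures. Since erasure counts are integers, each such user accumulates at least $\lfloor r(t-u)/(t-u+1)\rfloor + 1$ erasures, so the total number of erasures across these $t-u+1$ users is at least
\[
(t-u+1)\left(\left\lfloor \frac{r(t-u)}{t-u+1}\right\rfloor + 1\right).
\]

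The final step is a short division-algorithm computation. Writing $r(t-u) = q(t-u+1) + s$ with $q = \lfloor r(t-u)/(t-u+1)\rfloor$ and $0 \leq s \leq t-u$, the above lower bound equals $r(t-u) + (t-u+1-s) \geq r(t-u)+1$, which strictly exceeds the adversary's total erasure budget of $r(t-u)$ established in the first step. This contradiction forces at most $t-u$ users to exceed the stated threshold, so at least $u$ users must experience at most $\lfloor r(t-u)/(t-u+1)\rfloor$ erasures, completing the proof.

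I do not expect a serious obstacle here; the only subtlety is handling the floor carefully so that the contradiction holds in the worst case where the remainder $s$ is as large as possible ($s = t-u$), which is exactly where the bound is tight. It may also be worth remarking that the bound is achieved by the adversary concentrating erasures as evenly as possible on a set of $t-u+1$ users, which matches the intuition sketched just before the lemma statement.
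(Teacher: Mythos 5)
Your proof is correct and follows essentially the same pigeonhole argument as the paper: bound the total number of erasures by $r(t-u)$, then show that having $t-u+1$ users each exceed $\lfloor r(t-u)/(t-u+1)\rfloor$ would overshoot that budget. The only cosmetic difference is that you handle the floor uniformly via the division algorithm, whereas the paper splits into cases according to whether $r(t-u)/(t-u+1)$ is an integer; both yield the same contradiction.
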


\begin{proof}
In each time instant, the adversary can attempt to erase $t-u$ users' symbols. If they are always successful, there are a total of $r(t-u)$ erasures across all users and all time instances. Suppose that the adversary has full control over which users will experience erasures, and they choose to restrict these erasures to $\Gamma$ of the $t$ users. 
First, suppose $\Gamma < t-u+1$. In this case, there are at least $u$ users that have zero erasures, and we are done.
Now, let $\Gamma \geq t-u+1$. The average number of erasures per targeted user is $\frac{r(t-u)}{\Gamma}$.

We claim that at least $\Gamma-(t-u)$ users have at most $\lfloor\frac{r(t-u)}{t-u+1}\rfloor$ erasures. Suppose not, and that at least $\Gamma-[\Gamma-(t-u)]+1=t-u+1$ users have strictly more than $\lfloor\frac{r(t-u)}{t-u+1}\rfloor$ erasures. If $\frac{r(t-u)}{t-u+1}$ is an integer, the total number of erasures across all users and time instances is strictly bounded below by
$(t-u+1)\frac{r(t-u)}{t-u+1} = r(t-u)$.
If it is not an integer, the total number of erasures would be bounded below by $(t-u+1)\lceil \frac{r(t-u)}{t-u+1}\rceil  > r(t-u)$.
Both cases contradict that the total number of erasures is equal to (at most) $r(t-u)$. 

Thus, at least $\Gamma-(t-u)$ users have at most $\lfloor\frac{r(t-u)}{t-u+1}\rfloor$ erasures.
The $t-\Gamma$ non-targeted users have zero erasures. Thus, at least $t-\Gamma+(\Gamma-(t-u))=u$
users have at most $\lfloor\frac{r(t-u)}{t-u+1}\rfloor$ erasures.
\end{proof}

This upper bound on the number of erasures for some subset of $u= \lceil \gamma t\rceil $ users is tight if the adversary may choose which users to target, and if they are able to reliably erase their targeted users. Both are advantageous assumptions for the adversary; we note that they will not always be the case (see Section \ref{sec:case_study} for a channel case study without the latter property).

\begin{example}
     Consider the codebooks of Example \ref{example: extending blocklength} with $\gamma = \frac{u}{t}=\frac{2}{3}$.
    First consider the case were the adversary targets all users equally. A possible erasure pattern is given below:
    \begin{center}
    \begin{tabular}{ c c c c c c c }
     \textbf{User 1} & \underline{ $\varepsilon$ } & \underline{ $\varepsilon$ } & \underline{$c_{11}$} & \underline{$c_{12}$}  & \underline{$c_{11}$} & \underline{$c_{12}$}\\ 
     
     \textbf{User 2} & \underline{$c_{22}$} & \underline{$c_{21}$} & \underline{ $\varepsilon$ } & \underline{ $\varepsilon$ } & \underline{$c_{21}$} & \underline{$c_{22}$}  \\  
     
     \textbf{User 3} & \underline{$c_{31}$} & \underline{$c_{32}$} & \underline{$c_{32}$} & \underline{$c_{32}$} & \underline{ $\varepsilon$ } & \underline{ $\varepsilon$ }
     
    \end{tabular}
    \end{center}
    In this case, the decoder needs to be able to correct two erasures (per user) in order to correct two of the three users. 

    Next, we look at the case were the adversary focuses their efforts on $t-u+1=2$ users. Per Lemma \ref{lem:max_erasures}, this is the adversary's optimal strategy.

     \begin{center}
    \begin{tabular}{ c c c c c c c }
     \textbf{User 1} & \underline{ $\varepsilon$ } & \underline{ $\varepsilon$ } & \underline{ $\varepsilon$ } & \underline{$c_{12}$}  & \underline{$c_{11}$} & \underline{$c_{12}$}\\ 
     
     \textbf{User 2} & \underline{$c_{22}$} & \underline{$c_{21}$} & \underline{$c_{21}$} & \underline{ $\varepsilon$ } & \underline{ $\varepsilon$ } & \underline{ $\varepsilon$ }  \\  
     
     \textbf{User 3} & \underline{$c_{31}$} & \underline{$c_{32}$} & \underline{$c_{32}$} & \underline{$c_{32}$} & \underline{$c_{31}$} & \underline{$c_{32}$}
     
    \end{tabular}
    \end{center}
    Here, the decoder needs to correct three erasures in order to correct two of the three users.
\end{example}

\begin{remark}
\label{rem:min_dist}
According to Lemma \ref{lem:max_erasures}, if we wish to design $C_{\varepsilon,j}$ with probability of $\gamma$ partial correction error equal to zero,
$d_{min}(C_{\varepsilon,j})\geq \lfloor \frac{r(t-u)}{t-u+1} \rfloor +1$ for each $j\in[t]$. 
\end{remark}

Remark \ref{rem:min_dist} addresses the requirement of perfect correction of $C_{\varepsilon,j}$. Allowing for some vanishing decoding error probability, we turn to the capacity of the induced erasure AVC. 

\begin{lem}
\label{lem:extension}
Let $\gamma\in(0,1)$, $n\geq 1$, $t\geq 2$, and $W:=W_{Y|X_1\cdots X_tS}$ be a $t$-user AV-MAC. Suppose a $\gamma$ partially correcting authentication code $(M_1,\ldots,M_t,n)$ exists for $W$ such that $M_j:=2^{R_{j}n}>1$ for all $j\in [t]$ and the probability of $\gamma$ partial correction error is equal to zero. Then, $\mathscr{C}_{\text{auth},\gamma}$ has nonempty interior.
\end{lem}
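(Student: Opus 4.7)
The plan is to instantiate the concatenation scheme sketched at the start of Section \ref{sec:extension_scheme}: take the hypothesized zero-error inner code and compose it with appropriately chosen erasure-correcting outer codes to produce a longer code whose rate tuple is componentwise positive.

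Set $u := \lceil \gamma t\rceil$, and let $(f_1,\ldots,f_t,\phi)$ denote the given inner code of block length $n$, with $M_j = 2^{R_j n}$ and $R_j>0$. For a large integer $r$, I would have each user transmit $r$ inner codewords back-to-back, corresponding to the coordinates of a length-$r$ outer codeword in some outer code $C_{\varepsilon,j}$ over the alphabet $[M_j]$. The total block length becomes $nr$, and each user's overall rate equals $R_j$ multiplied by the rate $\rho_j := (\log_2|C_{\varepsilon,j}|)/(r\log_2 M_j)$ of the outer code. Provided the $\rho_j$ can be chosen bounded away from $0$ uniformly in $r$, every coordinate of the resulting rate tuple is positive.

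The decoder processes the received length-$nr$ word block by block: it applies $\phi$ to each of the $r$ inner blocks, obtaining for every user $j$ an element of $[M_j]\cup\{0\}$ per block, and then feeds these into the decoder of $C_{\varepsilon,j}$, treating a $0$ as an erasure. Because the inner code has zero probability of $\gamma$ partial correction error, in each of the $r$ blocks at least $u$ users receive their correct inner symbol (no substitutions occur), while the remaining at most $t-u$ coordinates are erasures. By Lemma \ref{lem:max_erasures}, at least $u$ users then experience a total of at most $e_r := \lfloor r(t-u)/(t-u+1)\rfloor$ erasures across their $r$ outer positions. Hence, if each $C_{\varepsilon,j}$ can correct any $e_r$ erasures, the outer decoders for (at least) those $u$ users return the correct outer message, which is exactly the $\gamma$ partial correction requirement with zero error.

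It remains to exhibit such outer codes of positive rate. The natural choice is an MDS code of length $r$ with minimum distance $e_r+1$ over a field of size at least $r$ (e.g., Reed--Solomon); by the Singleton bound it has $r-e_r$ information symbols, so $\rho_j \to 1/(t-u+1)>0$ as $r\to\infty$. If the MDS alphabet does not coincide with $[M_j]$, one can pack or split symbols at the cost of a constant factor in $r$, preserving the positivity of $\rho_j$. Alternatively, one can invoke Shannon's theorem for the induced erasure channel with erasure probability $(t-u)/(t-u+1)$, whose capacity is $1/(t-u+1)>0$, yielding vanishing rather than zero error, which is still sufficient for achievability. Combining the two layers produces a sequence of $(2^{\rho_1 R_1 \cdot nr},\ldots,2^{\rho_t R_t \cdot nr},nr)$ codes that are $\gamma$ partially correcting, with every coordinate of the rate tuple positive. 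Thus $\mathscr{C}_{\text{auth},\gamma}$ has nonempty interior. The main obstacle is the alphabet mismatch between the natural MDS constructions and the $[M_j]$'s, but this is routinely resolved by symbol packing or by the Shannon capacity argument for the induced erasure AVC.
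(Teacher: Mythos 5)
Your overall architecture is exactly the paper's: concatenate the zero-error inner code with erasure-correcting outer codes, use Lemma \ref{lem:max_erasures} to guarantee that at least $u=\lceil\gamma t\rceil$ users see at most $e_r=\lfloor r(t-u)/(t-u+1)\rfloor$ erasures, and argue that the induced erasure channel supports positive rate. However, your \emph{primary} instantiation of the outer code is flawed. An MDS code of length $r$ correcting any $e_r$ erasures must live over an alphabet of size roughly at least $r$, while the outer alphabet here is $[M_j]$ with $M_j=2^{R_jn}$ \emph{fixed} (the inner block length $n$ does not grow), so Reed--Solomon codes are unavailable for large $r$. The ``packing or splitting'' fix does not rescue this: grouping $k$ inner blocks into one super-symbol does not reduce the \emph{fraction} of erased super-symbols, which can still be as large as $\min\{e_r,r/k\}/(r/k)$, i.e.\ everything. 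Worse, the obstruction is information-theoretic, not just constructive: whenever $t-u\geq 1$ the required relative minimum distance is $e_r/r\to (t-u)/(t-u+1)\geq 1/2$, and for $M_j=2$ (permitted by the hypothesis $M_j>1$, and exactly the case in the paper's Section \ref{sec:case_study} constructions) the Plotkin bound forces any code with relative distance exceeding $1-1/M_j=1/2$ to have vanishing rate. So a positive-rate, \emph{zero-error} outer code simply does not exist in general, and Remark \ref{rem:min_dist} cannot be met with positive rate over small alphabets.

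Your fallback --- accept vanishing error and appeal to the capacity of the induced erasure channel --- is the correct route and is what the paper actually does, but it needs one repair: the induced channel is an \emph{adversarial} (state-constrained) erasure AVC, not a stochastic erasure channel, so Shannon's theorem for a DMC with erasure probability $(t-u)/(t-u+1)$ does not directly apply. The paper instead invokes Theorem 3 of \cite{CN88} for the AVC under a state (power) constraint; because a symmetrizing strategy for the erasure AVC would require erasing with probability one, which violates the constraint $\Lambda=(t-u)/(t-u+1)<1$, that theorem yields positive capacity (in fact $\tfrac{1}{t-u+1}\log M_j$ per outer symbol), and the message-independence of the adversary's state makes the average-error analysis go through. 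With that substitution your argument closes; as written, the MDS branch fails and the Shannon branch cites the wrong theorem.
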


\begin{proof}
Choose a user $j\in[t]$, and 
define a deterministic erasure AVC as follows: let $\mathcal{X}=[M_j]$, $\mathcal{Y}=[M_j]\cup \{\varepsilon\}$, and $\mathcal{S}=\{s_{0},s_1\}$. Then, $y=x$ if $s=s_0$, and $y=\varepsilon$ if $s=s_1$.
Let $u:=\lceil \gamma t \rceil $. Importantly, the adversary is power-constrained so that in a length-$r$ transmission they may choose at most $\lfloor \frac{r(t-u)}{t-u+1}\rfloor$ coordinates to be equal to $s_1$; the remainder must be equal to $s_0$. There are no constraints on the legitimate user's channel input. This channel mimics the worst-case scenario for (at least) $u$ users in the above-described concatenation scheme: each has at most $\lfloor\frac{r(t-u)}{t-u+1}\rfloor$ erasures, and in our induced channel we assume that any time the adversary attempts to erase a user they can do so successfully. 
We refer the reader to Theorem 3 of \cite{CN88} (and the forthcoming full version of this work) to verify that this channel has positive capacity.

It remains to explain why this implies an achievable positive rate tuple for the original AV-MAC. For each user $j\in [t]$, let a code sequence $(2^{Q_{j}r},r)$ achieve the capacity of the erasure AVC. Using the concatenation scheme described earlier in this section, with the existing zero-error $\gamma$ partial correction code as inner code, we achieve rate $Q_{j}R_{j}>0$ for user $j$.
\end{proof}

The case study in \cite{BGKKY20} calculates the exact capacity of an induced erasure AVC, which is dependent on the AV-MAC studied there (and extended in Section \ref{sec:case_study}), as well as the specific choice of inner code for that channel. There, when the adversary acted they had a 0.5 probability of erasing, as opposed to the guaranteed erasure assumed in the proof of Lemma \ref{lem:extension}. In other words, we believe it is possible to be more specific about the values of the positive rate tuples achievable using our extension scheme. We plan to address this question, as well as the question of whether a less stringent inner code may be utilized, in a full version of this work.


\section{Zero Probability of Partial Correction Error Codes Case Study}
\label{sec:case_study}

In this section, we turn to a particular $t$-user channel satisfying the necessary conditions of Section \ref{sec:cap_region} for authenticated partial correction. We will work to construct short block length codes with zero probability of $\gamma$ partial correction error, with the knowledge that such codes can be extended using the scheme of Section \ref{sec:extension_scheme}.
To define the channel, let $\mathcal{X}_1=\cdots=\mathcal{X}_t=\{0,1\}$, $\mathcal{S}=\{0,1,2,\ldots,\ell\}$ for some $\ell \geq 1$, and $\mathcal{Y}=\{0,1,\ldots, t+\ell\}$, where $Y=X_1+X_2+\cdots +X_t +S$. The no-adversary state is given by $s_0=0$. For ease of notation, we will denote this channel by $W^{+}_{t,\ell}$.
Observe that $W^{+}_{t,\ell}$ is deterministic given a choice of state. 

\subsection{Necessary conditions are satisfied}

We first verify that $W^{+}_{t,\ell}$ satisfies the necessary overwritability and symmetrizability conditions established in Section \ref{sec:cap_region}.

\begin{lem}
\label{lem:W+_over}
$W^{+}_{t,\ell}$ is not $m$-overwritable for any $m\in [t]$. 
\end{lem}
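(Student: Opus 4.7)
The plan is to argue by contradiction using the deterministic, additive structure of $W^{+}_{t,\ell}$. Suppose for contradiction that $W^{+}_{t,\ell}$ is $X_{i_1}\times\cdots\times X_{i_m}$-overwritable for some $m\in[t]$, with adversarial distribution $P:=P_{S|X_{i_1}\cdots X_{i_m}}$. Since the channel is invariant under permutation of the users, I may take $(i_1,\ldots,i_m)=(1,\ldots,m)$ without loss of generality, and fix any values $x_{m+1},\ldots,x_t$ for the untargeted users; the overwritability identity of Definition \ref{defn:overwrit-MAC} must hold for every such holding.

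The key step is to plug in the deterministic kernel $W(y|x_1,\ldots,x_t,s)=\mathbf{1}[y=x_1+\cdots+x_t+s]$. Writing $A=\sum_{j=1}^m x_j$, $A'=\sum_{j=1}^m x'_j$, and $B=\sum_{j=m+1}^t x_j$, the overwritability identity reduces to
$$\sum_{s}P(s\mid x'_1,\ldots,x'_m)\,\mathbf{1}[s=y-A-B]=\mathbf{1}[y=A'+B],$$
required for all $\mathbf{x},\mathbf{x}',y$. Evaluating at $y=A'+B$ (where the right-hand side is $1$) forces the state $s=A'-A$ to lie in $\{0,1,\ldots,\ell\}$ and to satisfy $P(A'-A\mid x'_1,\ldots,x'_m)=1$, simultaneously for every choice of $x_1,\ldots,x_m\in\{0,1\}$.

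The contradiction is then immediate: set $x'_1=\cdots=x'_m=0$ so that $A'=0$, and pick $x_1=1$, $x_2=\cdots=x_m=0$ so that $A=1$ (available because $m\geq 1$). Then $A'-A=-1\notin\{0,\ldots,\ell\}$, so the left-hand side at $y=B$ is $0$ while the right-hand side is $1$. This contradicts overwritability, so no such $P$ exists for any $m\in[t]$.

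I expect no real obstacle beyond correctly unpacking the overwritability identity in the MAC setting, i.e., making explicit that the identity must hold for every fixed choice of the untargeted inputs $x_{m+1},\ldots,x_t$. Once that is done, the additive, deterministic nature of $W^{+}_{t,\ell}$ does the work: the adversary would need a negative state to ``undo'' a legitimate $1$ contributed by a targeted user, and no such state is available.
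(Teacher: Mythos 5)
Your proof is correct, and it reaches the conclusion by the same basic strategy as the paper---substituting explicit inputs into the overwritability identity and exploiting the deterministic, additive form of $W^{+}_{t,\ell}$---but the contradiction is organized differently. The paper makes two substitutions: first $x_1'=\cdots=x_m'=1$, $x_1=\cdots=x_t=0$, $y=0$, which forces $P(0\mid 1,\ldots,1)=0$, and then $x_1=\cdots=x_m=1$ with $y=m$, which forces $P(0\mid 1,\ldots,1)=1$; the contradiction lives in the value of the adversarial distribution itself. You instead make a single substitution in which the unprimed targeted sum exceeds the primed one, so the state needed to satisfy the identity at $y=A'+B$ would be $A'-A=-1\notin\{0,\ldots,\ell\}$, and the identity collapses to $0=1$ outright. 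Your route is slightly more economical and makes the structural obstruction explicit (no nonnegative state can cancel a legitimate $1$ contributed by a targeted user), while the paper's two-step version has the minor virtue of pinning the inconsistency on a single probability value. Your explicit remark that the identity must hold for each fixed choice of the untargeted inputs $x_{m+1},\ldots,x_t$ is the correct reading of Definition \ref{defn:overwrit-MAC} and matches how the paper's own proof applies it.
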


\begin{proof}
Let $W:=W^{+}_{t,\ell}$ and $m\in[t]$. 
Toward contradiction, assume the channel is $m$-user overwritable in the first $m$ coordinates, and let $P$ be the distribution guaranteed by the definition of overwritability. Let $x_1'=\cdots =x_m'=1$, $x_1=\cdots =x_t=0$, and $y=0$. Then, 
\begin{align}
  \sum_{s=0}^{\ell} P(s\vert \underbrace{1,\ldots,1}_{m})W(0&\vert \underbrace{0,\ldots, 0}_{t},s)=\nonumber \\
  &W(0\vert \underbrace{\overbrace{1,\ldots,1}^{m},\overbrace{0,\ldots,0}^{t-m}}_{t},0)  
\end{align}
On the left hand side, $W(0\vert 0,\ldots,0,s)=1$ if and only if $s=0$; otherwise it is zero. Therefore, the left side is equal to $P(0\vert 1,\ldots,1)$. On the right side, $W(0\vert 1,\ldots,1,0,\ldots,0,0)=0$. Thus, $P(0\vert 1,\ldots,1)=0$.

A similar argument with $x_1^{'}=\cdots =x_m^{'}=1$, $x_1=x_2=\cdots=x_m=1$, $x_{m+1}=\cdots=x_t=0$, and $y=m$ yields $P(0|1,\dots,1)=1$. 
This is a contradiction because $P(0\vert 1,\ldots,1)$ cannot be equal to both 0 and 1. Therefore, the channel is not $m$-user overwritable.
\end{proof}

Recall that the other necessary condition for a $t$-user channel to be $\gamma$ partially correcting is that the channel is not $q$-user symmetrizable for any $q\geq t-\lceil\gamma t\rceil$. The following establishes allowed values of $\gamma$ given the adversary's power constraint $\ell$.

\begin{lem}
\label{lem:W+_symm}
    Let $\ell \leq t$. Then $W^{+}_{t,\ell}$ is $q$-user symmetrizable for any subset of $q$ users exactly when $q\leq \ell$.
\end{lem}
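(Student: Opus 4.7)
The plan is to prove both directions of the equivalence by exploiting that $W^{+}_{t,\ell}$ is deterministic given the state. By symmetry of the channel in its user inputs, it suffices to analyze the first $q$ users; the general case follows by relabeling. Writing $\alpha = \sum_{i=1}^q x_i$, $\alpha' = \sum_{i=1}^q x_i'$, and $\gamma = \sum_{i=q+1}^t x_i$, and using that $W(y\mid x_1,\ldots,x_t,s)$ equals $1$ exactly when $y = \sum_{i=1}^t x_i + s$, the symmetrizability equation collapses to the shift identity
\[
P(\sigma \mid x') \;=\; P(\sigma + \alpha - \alpha' \mid x) \qquad \forall\, \sigma \in \mathbb{Z},\ x, x' \in \{0,1\}^q,
\]
with the convention $P(s\mid \cdot) = 0$ for $s \notin \{0,\ldots,\ell\}$.

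For the ``if'' direction ($q \leq \ell$), I would exhibit a symmetrizing distribution explicitly: let the adversary play $s = \alpha(x) = \sum_{i=1}^q x_i$ deterministically. Since $\alpha \leq q \leq \ell$, this is always a valid state. Direct substitution into the symmetrizability equation shows both sides equal $1$ precisely when $y = \alpha + \alpha' + \gamma$ (and $0$ otherwise), so the equation holds.

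For the ``only if'' direction, I would suppose a symmetrizing $P$ exists and extract enough structure to force $q \leq \ell$. Setting $\alpha = \alpha'$ in the shift identity gives $P(s\mid x) = P(s\mid x')$ whenever $x$ and $x'$ have the same weight, so $P$ depends on $x$ only through $\alpha$; write $Q(s\mid\alpha) := P(s\mid x)$ and $Q_0 := Q(\cdot\mid 0)$, a distribution on $\{0,\ldots,\ell\}$. Setting $x = 0^q$ in the shift identity then yields $Q(s\mid\alpha') = Q_0(s-\alpha')$ for every $\alpha' \in \{0,\ldots,q\}$. Choosing $\alpha' = q$ (realized by $x' = (1,\ldots,1)$), the requirement that $Q(\cdot\mid q)$ be a probability distribution on $\{0,\ldots,\ell\}$ forces
\[
1 \;=\; \sum_{s=0}^{\ell} Q_0(s - q) \;=\; \sum_{k=0}^{\ell - q} Q_0(k),
\]
where the second equality uses that $Q_0$ vanishes outside $\{0,\ldots,\ell\}$. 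If $q > \ell$, this sum is empty and hence zero, contradicting the value $1$; therefore $q \leq \ell$.

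The main subtlety I expect is bookkeeping around the finite state space: the shift identity is stated over all integers, but mass ``pushed'' outside $\{0,\ldots,\ell\}$ by the shift must be discarded rather than wrap around, and it is precisely this truncation that produces the contradiction when $q > \ell$. A minor technical point is the deduction that $P$ depends only on $\alpha$, which follows from the shift identity for $q \geq 2$ by comparing inputs of equal weight and is vacuous when $q = 1$.
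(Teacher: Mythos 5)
Your proof is correct. The ``if'' direction is essentially the paper's own argument: both of you take the deterministic symmetrizer $P(s\mid x_1,\ldots,x_q)=\mathbb{1}[s=\sum_i x_i]$, which is a valid state precisely because $q\leq \ell$, and observe that the two sides of the symmetrizability equation become indicators of the same event. The ``only if'' direction, however, takes a genuinely different route. The paper fixes $y=t$ and, for each $z\in\mathcal{S}$, chooses full-length inputs $x=1^{t-z}0^{z}$ versus $x'=0^{q}$ (so the untouched coordinates $x_{q+1},\ldots,x_t$ carry part of the weight), deducing $P(z\mid 0^q)=0$ for every $z$ and contradicting normalization of $P(\cdot\mid 0^q)$. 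You instead set all of $x_{q+1},\ldots,x_t$ to zero, collapse the determinism of the channel into a translation identity $P(\sigma\mid x')=P(\sigma+\alpha-\alpha'\mid x)$, and get the contradiction from normalization of $P(\cdot\mid 1^q)$: its mass would have to live in $\{0,\ldots,\ell-q\}$, which is empty when $q>\ell$. Your version is more local (it never touches the non-symmetrized coordinates) and yields the extra structural fact that any symmetrizing $P$ must be a weight-indexed shift of a single base distribution, which makes the role of the power constraint $\ell$ transparent; the paper's version avoids stating the shift identity and its integer-range bookkeeping, which you correctly flag as the one point requiring care (the identity need only be verified at arguments where either side can be nonzero, and taking $\gamma=0$ supplies an admissible $y\in\{0,\ldots,t+\ell\}$ for all such arguments). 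Both arguments are sound.
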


\begin{proof}
Let $W:=W^{+}_{t,\ell}$. First we show that the channel is $q$-user symmetrizable for $q\leq \ell$.
Consider the following probability distribution:
$P(s\vert x_1, ..., x_q)= 1$ if $\sum_{i=1}^{q} x_i=s$, and $0$ otherwise.
Notice that because each $x_i\in\{0,1\}$, we have $0\leq  \sum_{i=1}^{q}x_i \leq q$. Because $q\leq \ell$, for a fixed choice of $x_i$'s, $P(s\vert x_1,\ldots, x_q)=1$ for exactly one choice of $s$.
Using this distribution in Definition \ref{defn:symm-MAC}, the left hand size yields
$
  W\left(y \vert  x_1,\ldots, x_t,  \sum_{i=1}^{q}x_i'\right)$, and the right hand size becomes
  $
 W\left(y \vert x_1',\ldots,x_q', x_{q+1},\ldots,x_t, \sum_{i=1}^{q}x_i\right)
$.
The sum of the inputs will be the same on both sides, so either both sides are equal to 1 if the sum of the inputs is equal to $y$ or both sides are equal to 0 if the sum of the inputs is not equal to $y$. Therefore, 
the channel is $q$-user symmetrizable for the first $q$ users. Notice that permuting the coordinates will not change the argument, so we have shown that the channel is $q$-user symmetrizable for any subset of $q$ users.

Now we will show the channel is \textit{not} $q$-user symmetrizable when $\ell+1\leq q \leq t$. By way of contradiction, assume the channel is $q$-user symmetrizable for such a $q$. Let $P$ be the distribution guaranteed by Definition \ref{defn:symm-MAC}, and let $z\in S$ (note that $z\leq \ell <t$). We claim that $P(z\vert0,...,0)=0$. Assume $x_1'=\ldots=x_q'=0$, $x_1=\ldots=x_{t-z}=1$, $x_{t-z+1}=\ldots=x_t=0$, and $y=t$. Then, the left hand side of the definition is the sum over $s\in \mathcal{S}$ of
\begin{equation}
\label{eq:left}
    P(s\vert\underbrace{ 0,\ldots,0}_{q})W(t\vert\underbrace{ \overbrace{1,\ldots, 1}^{t-z},{0, \ldots,0}}_{t},s)  
\end{equation}
and the right hand side is equal to the sum over $s\in\mathcal{S}$ of
\begin{equation}
\label{eq:right}
    P(s\vert \underbrace{\overbrace{1,\ldots,1}^{\min\{t-z,q\}},{0,\ldots,0}}_{q})W(t\vert\underbrace{\overbrace{ 0,\ldots, 0}^{q},\overbrace{1,\ldots,1}^{((t-z)-q)^+},{0,\ldots,0}}_{t},s)
\end{equation}
where we use $\lambda^{+}$ to denote $\max\{\lambda,0\}$.
In Equation \eqref{eq:left}, $W(t\vert1,\ldots,1,0,\ldots,0,s)=1$ if and only if $s=z$. Therefore, \eqref{eq:left} is equal to $P(z\vert 0,\ldots,0)$. 
In Equation \eqref{eq:right}, $W(t|0,\dots,0,1,\dots,1,0,\ldots,0,s)=1$ if and only if $t=(t-z-q)^++s$. Since $s\leq \ell<q\leq t$, it follows that $s-q<0$, and also $s<t$. Thus, $(t-z-q)^++s<t$, and \eqref{eq:right} equals 0.
Therefore, $P(z\vert 0,\ldots,0)=0$ for all $0\leq z\leq \ell$,
a contradiction. Therefore, no such $P$ exists and the channel is not $q$-user symmetrizable for $\ell+1\leq q\leq t$.
\end{proof}

Lemmas \ref{lem:W+_over} and \ref{lem:W+_symm} together establish the following:

\begin{thm}
    The channel $W_{t,\ell}^{+}$ satisfies the necessary conditions for $\gamma$ partial correction established by Theorem \ref{thm:auth_empty}, Lemma \ref{lem:gamma_contained_auth}, and Theorem \ref{thm:gamma_empty_int}. 
\end{thm}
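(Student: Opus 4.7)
The plan is to assemble the two immediately preceding lemmas against the two necessary conditions from Section \ref{sec:cap_region}, and to pin down the range of $\gamma$ for which both conditions are actually met by $W^{+}_{t,\ell}$.

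First I would address the overwritability condition. Theorem \ref{thm:auth_empty} says that if $W^{+}_{t,\ell}$ were $m$-overwritable for any $m\in[t]$, then $\mathscr{C}_{\text{auth}}$ would have empty interior, which by Lemma \ref{lem:gamma_contained_auth} forces $\mathscr{C}_{\text{auth},\gamma}$ to have empty interior as well. Lemma \ref{lem:W+_over} has already ruled out $m$-overwritability for every $m\in[t]$, so this obstruction is avoided with no further work and no constraint on $\gamma$.

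Next I would handle the symmetrizability condition. Theorem \ref{thm:gamma_empty_int} says we must avoid $m$-symmetrizability for all $m > t-\lceil\gamma t\rceil$. Lemma \ref{lem:W+_symm} gives an exact characterization: $W^{+}_{t,\ell}$ is $m$-symmetrizable precisely when $m\leq \ell$. Hence the symmetrizability obstruction is avoided exactly when no integer $m$ satisfies both $m\leq\ell$ and $m > t-\lceil\gamma t\rceil$, i.e., when $\ell\leq t-\lceil\gamma t\rceil$, equivalently $\lceil\gamma t\rceil \leq t-\ell$. So the theorem should be read (and I would state it in the proof) as holding for those $\gamma\in(0,1)$ with $\lceil\gamma t\rceil\leq t-\ell$, which is a nonempty range as long as $\ell<t$ (matching the hypothesis $\ell\leq t$ of Lemma \ref{lem:W+_symm} and still leaving $\gamma$ room to be positive).

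There is no real obstacle beyond bookkeeping: both technical ingredients are already proved, so the proof is essentially a one-paragraph invocation of Lemmas \ref{lem:W+_over} and \ref{lem:W+_symm} against the contrapositives of Theorem \ref{thm:auth_empty} (with Lemma \ref{lem:gamma_contained_auth}) and Theorem \ref{thm:gamma_empty_int}. The only subtlety I would flag is the implicit $\gamma$ restriction $\lceil\gamma t\rceil\leq t-\ell$, which is what connects the adversary's power constraint $\ell$ to the partial correction threshold and will become the governing regime for the case-study codes constructed later in the section.
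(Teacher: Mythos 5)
Your proposal is correct and matches the paper's argument, which simply combines Lemma \ref{lem:W+_over} (no $m$-overwritability, handling Theorem \ref{thm:auth_empty} with Lemma \ref{lem:gamma_contained_auth}) and Lemma \ref{lem:W+_symm} (symmetrizability exactly for $q\leq\ell$, handling Theorem \ref{thm:gamma_empty_int}). Your explicit extraction of the governing restriction $\lceil\gamma t\rceil\leq t-\ell$ is a worthwhile clarification of what the paper only signals implicitly when it says Lemma \ref{lem:W+_symm} ``establishes allowed values of $\gamma$.''
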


\subsection{Zero probability of $\gamma$ partial correction characterization}

To aid in our discussion of code construction, we next introduce notation that will help explain when a codebook is $\gamma$ partially correcting with zero probability of error for $W_{t,\ell}^{+}$.
Let $C_j := f_{j}(M_{j})$ denote the block length $n$ codebook of user $j$, so that each $\mathbf{x}\in C_{j}$ is equal to $f_j(i)$ for some $i\in M_{j}$.  Then, define the following multisets:
 \begin{align}
     A_0 &:=\left\{\mathbf{u}=\sum_{j=1}^{t}\mathbf{x}_j \ \biggr\vert \ \mathbf{x}_j\in C_j \right\} \\ 
     A_1&:=\left\{\mathbf{u}=\mathbf{s}+\sum_{j=1}^{t}\mathbf{x}_j \ \biggr\vert \ \mathbf{x}_j\in C_j, \ \mathbf{s}\in \mathcal{S}\setminus \{\mathbf{s}_{0}\} \right\}
 \end{align}
In other words, the set $A_{0}$ is the set of all possible channel outputs (with multiplicity) when the adversary does not act, and $A_{1}$ is the set of all outputs when the adversary does act.

\begin{lem}
\label{lem:A_char}
    Over the channel $W_{t,\ell}^{+}$, a codebook $t$-tuple is $\gamma$ partially correcting with zero probability of $\gamma$ partial correction error if and only if all of the following hold:
    \begin{enumerate}[(1)]
        \item $A_0\cap A_1=\emptyset$;
        \item The elements of $A_0$ are unique;
        \item For each $\mathbf{w}\in A_1$ that appears with multiplicity, there exists some subset $J=\{j_1,j_2,\ldots,j_{\lceil \gamma t\rceil }\}\subseteq [t]$ such that if $\mathbf{s}+\sum_{j=1}^{t}\mathbf{x}_j=\mathbf{w}$ and $\mathbf{s}'+\sum_{j=1}^{t}\mathbf{x}'_j=\mathbf{w}$, then $\mathbf{x}_i=\mathbf{x}'_i$ for all $i\in J$.
\end{enumerate}
\end{lem}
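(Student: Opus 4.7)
The plan is to frame the result as an equivalence between the code's partial correction guarantee and a purely combinatorial property of $A_0$ and $A_1$. Because $W_{t,\ell}^{+}$ is deterministic given its state, a channel output $\mathbf{y}$ under transmitted tuple $\mathbf{i}$ and state $\mathbf{s}$ is exactly $\mathbf{s}+\sum_j\mathbf{x}_j(i_j)$, so zero probability of $\gamma$ partial correction error is equivalent to the existence of a (deterministic) decoder $\phi$ for which $\phi(\mathbf{y})=\mathbf{i}$ whenever $\mathbf{y}\in A_0$ arises from $\mathbf{i}$ under the no-adversary state, and $\phi(\mathbf{y})\in A_\mathbf{i}$ whenever $\mathbf{y}\in A_1$ arises from $\mathbf{i}$ under adversarial state $\mathbf{s}\neq\mathbf{s}_0$.

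For necessity, I would argue each of (1)--(3) separately. If (1) fails, there exist tuples $\mathbf{i},\mathbf{i}'$ and a state $\mathbf{s}\neq\mathbf{s}_0$ with $\sum_j\mathbf{x}_j(i_j)=\mathbf{s}+\sum_j\mathbf{x}_j(i'_j)$; the decoder must output $\mathbf{i}$ exactly on this sequence for the no-adversary case, but then we would also need $\mathbf{i}\in A_{\mathbf{i}'}$ for the adversarial case, forcing $i_j=i'_j$ for every $j$ (since each entry of $\mathbf{i}$ is a positive index) and hence $\mathbf{s}=\mathbf{s}_0$, a contradiction. Failure of (2) directly prevents exact decoding under the no-adversary state, since two distinct tuples produce the same output. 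If (3) fails, then some $\mathbf{w}\in A_1$ has preimages $(\mathbf{i}^{(1)},\mathbf{s}^{(1)}),\ldots,(\mathbf{i}^{(k)},\mathbf{s}^{(k)})$ for which no set of $\lceil\gamma t\rceil$ coordinates supports agreement across all $\mathbf{i}^{(l)}$; for any decoder value $\hat{\mathbf{i}}:=\phi(\mathbf{w})$, the constraints $\hat{\mathbf{i}}\in A_{\mathbf{i}^{(l)}}$ for every $l$ force $\hat{i}_j=0$ on every coordinate where the preimages disagree, leaving $|\supp{\hat i}|<\lceil\gamma t\rceil$ and violating $\gamma$ partial correction for at least one preimage.

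For sufficiency, I would exhibit a concrete decoder. Given $\mathbf{y}\in\mathcal{Y}^n$: if $\mathbf{y}\in A_0$, then by (1) and (2) it has a unique preimage $\mathbf{i}$, and set $\phi(\mathbf{y})=\mathbf{i}$; if $\mathbf{y}\in A_1\setminus A_0$ with unique preimage $(\mathbf{i},\mathbf{s})$, set $\phi(\mathbf{y})=\mathbf{i}$; if $\mathbf{y}\in A_1\setminus A_0$ appears with multiplicity, use (3) to fix a witnessing set $J$, assign $\phi(\mathbf{y})_j$ to the common value of the preimages on $j\in J$ and $0$ on $j\notin J$; on the remaining sequences, decode arbitrarily. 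Verification is then routine against the definitions in Section \ref{sec:prelims}: the no-adversary output lies in $A_0$ and is decoded exactly by (1)--(2); an adversarial output $\mathbf{y}$ generated from $\mathbf{i}$ lies in $A_1\setminus A_0$ by (1), and $\phi(\mathbf{y})$ agrees with $\mathbf{i}$ on at least $\lceil\gamma t\rceil$ coordinates and is $0$ elsewhere, so it lies in $A_\mathbf{i}$.

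The main subtlety I anticipate is in the necessity of (3) in its full strength: pairwise agreement of preimages on $\lceil\gamma t\rceil$ coordinates does \emph{not} suffice; one must produce a single subset $J$ that works simultaneously for all preimages of $\mathbf{w}$. The argument sketched above hinges on the observation that $\phi$ is deterministic, so $\supp{\hat i}$ is a single fixed set that must sit inside the intersection of all pairwise agreement sets. This is the one place where the \emph{global} multiplicity structure of $A_1$, rather than merely its pairwise overlaps, is forced on the codebook tuple, and it is the step I expect to require the most care in write-up.
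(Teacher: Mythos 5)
Your proposal is correct and follows essentially the same route as the paper's own (much terser) proof: conditions (1)--(2) are matched to the no-adversary error event and (1)+(3) to the adversarial one, exploiting that $W_{t,\ell}^{+}$ is deterministic given the state. Your explicit decoder construction and the observation that a single set $J$ must serve \emph{all} preimages of a repeated $\mathbf{w}\in A_1$ simultaneously (because $\phi$ is deterministic) are details the paper leaves implicit, but they do not constitute a different argument.
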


\begin{proof}
    Condition (1) ensures that the decoder can reliably distinguish between the case where the adversary has acted and the case where they have transmitted a sequence of all zeros ($\mathbf{s}_0$). Taken together with (2), we have $e_{\gamma}(\mathbf{i},\mathbf{s}_{0})=0$ for every message tuple $\mathbf{i}$. If either condition fails, $e_{\gamma}(\mathbf{i},\mathbf{s}_{0})>0$.

    With (1), condition (3) establishes that $e_{\gamma}(\mathbf{i},\mathbf{s})=0$ when $\mathbf{s}\neq \mathbf{s}_{0}$: if there are repeated elements in $A_1$, we are guaranteed to be able to correct $\lceil \gamma t\rceil$ of the messages, even if the others must be discarded. If condition (3) fails, $e_{\gamma}(\mathbf{i},\mathbf{s})>0$.
\end{proof}
With this characterization in hand, we turn to short codebook design strategies. For the remainder of the paper, we will focus on the codebooks $C_j:=f_{j}(M_j)$; thus, we will will discuss codebook tuples of the form $(C_1,\ldots,C_t)$.

\subsection{Two users}

Here we present necessary conditions for $\gamma = 0.5$ partially correcting codebook pairs over $W_{2,1}^{+}$ with zero probability of $\gamma$ partial correction error, and bound the sizes of these codebooks for fixed block length.

\begin{example}
     \label{ex:BeemerAVMAC}
    The codebook pair $C_1=\{011, 100\}$, and $C_2=\{010, 101\}$ is $0.5$ partially correcting over $W_{2,1}^{+}$ with zero probability of $\gamma$ partial correction error. This example was explored extensively in \cite{BGKKY20}.
\end{example}

The each codebook of Example \ref{ex:BeemerAVMAC} has the property that codeword supports {are not contained in one another.}
We find that this is true in general for partial correction over $W_{t,\ell}^{+}$.

\begin{thm}
    \label{thm:vec_supps}
    Let $(C_1,\dots,C_t)$ be a codebook tuple with $C_j\in \{0,1\}^{n}$ for $j\in[t]$. If, for some $j\in[t]$, $\x,\y\in C_j$ with $\x\neq \y$ and $\text{\em supp}(\x)\subseteq\text{\em supp}(\y)$, then the codebook tuple $(C_1,\dots,C_t)$ is not $\gamma$ partially correcting with zero probability of $\gamma$ partial correction error over $W_{t,\ell}^{+}$ for any $\gamma\in(0,1)$.
\end{thm}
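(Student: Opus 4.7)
The plan is to invoke Lemma \ref{lem:A_char} by manufacturing a single channel output that lies simultaneously in $A_0$ and $A_1$, which violates condition (1) and thereby rules out zero error regardless of $\gamma$.

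Without loss of generality I would take $\x,\y \in C_1$ to be the two distinct codewords with $\supp{x} \subseteq \supp{y}$. The first key step is to observe that the integer difference $\s^{*} := \y - \x$ is a nonzero vector in $\{0,1\}^n$: the support containment forces each coordinate of $\y - \x$ into $\{0,1\}$, and $\x \neq \y$ guarantees that at least one coordinate equals $1$. Because $\ell \geq 1$, this makes $\s^{*}$ a legitimate adversarial state sequence distinct from $\s_0 = \0$. Next, I would fix any codewords $\x_2 \in C_2, \ldots, \x_t \in C_t$ (which exist since each $C_j$ is nonempty) and compare two transmission scenarios. In the first, user 1 sends $\y$ and the adversary sends $\s_0$, giving channel output $\u := \y + \sum_{j=2}^{t} \x_j$, which by definition lies in $A_0$. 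In the second, user 1 sends $\x$, the other users send the same $\x_2, \ldots, \x_t$, and the adversary sends $\s^{*}$; the output is $\x + \sum_{j=2}^{t} \x_j + (\y - \x) = \u$, which lies in $A_1$. Thus $\u \in A_0 \cap A_1$, condition (1) of Lemma \ref{lem:A_char} fails, and since that condition does not mention $\gamma$, the conclusion follows uniformly for every $\gamma \in (0,1)$.

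The only real obstacle is the check that $\s^{*}$ is admissible as a state sequence for $W^{+}_{t,\ell}$: each coordinate must lie in $\{0,1,\ldots,\ell\}$, and this is exactly what the support-containment hypothesis together with $\ell \geq 1$ provides. Everything else is a one-line algebraic cancellation plus a single appeal to the already-established equivalence in Lemma \ref{lem:A_char}; no case analysis on $\gamma$ or on the structure of the other users' codebooks is required, which is precisely why the statement holds uniformly in $\gamma$.
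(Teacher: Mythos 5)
Your proposal is correct and follows essentially the same route as the paper: the paper also sets the adversarial state to be the indicator of $\text{supp}(\mathbf{y})\setminus\text{supp}(\mathbf{x})$ (i.e., $\mathbf{y}-\mathbf{x}$), observes it lies in $\mathcal{S}^n$, and concludes $A_0\cap A_1\neq\emptyset$ via Lemma \ref{lem:A_char}. Your explicit step of fixing the other users' codewords to exhibit a common full channel output is a slightly more careful rendering of the same argument.
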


\begin{proof}
    Let $\gamma \in (0,1)$. Suppose that for some $j\in[t]$ there exist $\x\neq \y\in C_j$ such that $\supp{x}\subseteq\supp{y}$. Let $\supp{y}\setminus\supp{x}$ be the support of the adversarial contribution $\s$. Notice that $\s\in \{0,1\}^{n}\subseteq \mathcal{S}^{n}$, and $\s+\mathbf{x}=\mathbf{y}+\mathbf{0}$. 
    Thus $A_0\cap A_1\neq \emptyset$. By Lemma \ref{lem:A_char}, the result follows.
\end{proof}

Let the partially ordered set (poset) $\mathcal{P}([n])$ be the power set of $[n]$ together with the partial order defined by set inclusion. Elements of the poset can alternatively be thought of as vectors in $\{0,1\}^n$; elements whose supports are contained in one another are then {related} under the partial order.
Theorem \ref{thm:vec_supps} states that each codebook of a $\gamma$ partially correcting (with zero probability of $\gamma$ partial correction error) codebook tuples are antichains (sets of unrelated elements) in $\mathcal{P}([n])$. The following corollary is a direct consequence of Theorem \ref{thm:vec_supps}.

\begin{cor}
    Suppose the codebook tuple $(C_1,\dots,C_t)$ is $\gamma$ partially correcting with zero probability of $\gamma$ partial correction error over $W_{t,\ell}^{+}$ for some $\gamma\in(0,1)$. If $|C_j|>1$, then $\mathbf{0},\mathbf{1}\notin C_j$.
\end{cor}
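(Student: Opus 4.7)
The plan is to derive this corollary directly from Theorem \ref{thm:vec_supps} via a short contrapositive argument. The observation is that $\mathbf{0}$ and $\mathbf{1}$ are the unique minimum and maximum elements of the poset $\mathcal{P}([n])$ under set containment, so their presence in a codebook with any other codeword immediately produces a support-containment pair of the kind forbidden by Theorem \ref{thm:vec_supps}.

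First I would fix $j\in[t]$ with $|C_j|>1$ and assume toward contradiction that $\mathbf{0}\in C_j$. Then I pick any $\y\in C_j$ with $\y\neq \mathbf{0}$, which exists since $|C_j|>1$, and note that $\text{supp}(\mathbf{0})=\emptyset \subseteq \text{supp}(\y)$. The hypotheses of Theorem \ref{thm:vec_supps} are met (two distinct codewords in $C_j$ with nested supports), so the codebook tuple cannot be $\gamma$ partially correcting with zero probability of $\gamma$ partial correction error, contradicting the assumption on $(C_1,\dots,C_t)$.

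Next I would handle $\mathbf{1}$ symmetrically: assume $\mathbf{1}\in C_j$, pick $\x\in C_j$ with $\x\neq \mathbf{1}$, and observe $\text{supp}(\x)\subseteq [n]=\text{supp}(\mathbf{1})$. Applying Theorem \ref{thm:vec_supps} again yields the same contradiction. Together these two cases give $\mathbf{0},\mathbf{1}\notin C_j$ as claimed.

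There is no real obstacle here; the statement is genuinely a corollary, and the only thing to be careful about is invoking Theorem \ref{thm:vec_supps} with the \emph{distinct} codewords required by its hypothesis, which is exactly where the assumption $|C_j|>1$ is used (without it, $C_j=\{\mathbf{0}\}$ or $C_j=\{\mathbf{1}\}$ would be vacuously fine). The proof is essentially two lines.
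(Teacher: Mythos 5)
Your proof is correct and matches the paper's intent exactly: the paper gives no explicit proof, noting only that the corollary is a direct consequence of Theorem \ref{thm:vec_supps}, and your argument (that $\mathbf{0}$ and $\mathbf{1}$ form support-containment pairs with any other codeword, with $|C_j|>1$ guaranteeing a distinct partner) is precisely that consequence spelled out.
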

While each individual codebook must be an antichain, the disjoint union of codebooks need not be. Indeed, Example \ref{ex:BeemerAVMAC} has two (disjoint) related pairs across codebooks. The following places a limit on such support containment.
\begin{thm}
\label{thm:potpourri}
Let $(C_1,C_2)$ be a $0.5$ partially correcting codebook pair with zero probability of $\gamma$ partial correction error over $W_{2,1}^{+}$. Provided $|C_1|,|C_2|>1$, $C_1\cap C_2=\emptyset$ and $C_1\sqcup C_2$ has at most two pairs of related codewords having the property that the intersection of these pairs is empty.
\end{thm}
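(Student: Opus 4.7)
My plan is to prove both conclusions by contradiction, invoking Lemma \ref{lem:A_char} and, in each case, constructing a pair of distinct transmissions whose channel outputs coincide. I then argue that the coincidence is incompatible with one of conditions (1)--(3) in that lemma.

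For $C_1 \cap C_2 = \emptyset$, I would suppose $\x \in C_1 \cap C_2$, choose $\x' \in C_1 \setminus \{\x\}$ and $\y \in C_2 \setminus \{\x\}$ (available since $|C_1|, |C_2| > 1$), and pair the transmission $(\x, \y, \s)$ with $(\x', \x, \s')$ for $\s := (\x' - \y)^+$ and $\s' := (\y - \x')^+$ (coordinate-wise positive parts, which lie in $\{0,1\}^n$). Splitting on $(\x'_i, \y_i) \in \{0,1\}^2$ shows both outputs equal $\x_i + \max(\x'_i, \y_i)$ at each coordinate $i$. If $\x' = \y$ then both $\s = \s' = \0$, and the shared output arises from the distinct message tuples $(\x, \y)$ and $(\y, \x)$ in $A_0$, contradicting (2). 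Otherwise, if exactly one of $\s, \s'$ vanishes, the common output lies in $A_0 \cap A_1$ and (1) fails; if both are nonzero, both transmissions are in $A_1$ but user 1 differs ($\x$ versus $\x'$) and user 2 differs ($\y$ versus $\x$), so no $J \subseteq [2]$ of cardinality $\lceil \gamma t \rceil = 1$ satisfies (3).

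For the related-pair count, Theorem \ref{thm:vec_supps} forces each $C_j$ to be an antichain, and by the part just proved $C_1 \cap C_2 = \emptyset$, so every related pair consists of a $C_1$-codeword and a distinct $C_2$-codeword, with one contained in the other (defining the pair's \emph{direction}). Suppose three pairwise disjoint related pairs exist; by pigeonhole two share a direction, say $(\x_1, \y_1), (\x_2, \y_2)$ with $\x_i \in C_1$, $\y_i \in C_2$, $\x_i \subseteq \y_i$, and $\{\x_1, \y_1\} \cap \{\x_2, \y_2\} = \emptyset$. Writing $\mathbf{b}_i := \y_i - \x_i \in \{0,1\}^n \setminus \{\0\}$, I would pair the crossed transmissions $(\x_1, \y_2, (\mathbf{b}_1 - \mathbf{b}_2)^+)$ and $(\x_2, \y_1, (\mathbf{b}_2 - \mathbf{b}_1)^+)$; both outputs equal $\x_1 + \x_2 + \max(\mathbf{b}_1, \mathbf{b}_2)$ coordinate-wise. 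A trichotomy in the spirit of part one --- $\mathbf{b}_1 = \mathbf{b}_2$, one of $\mathbf{b}_1, \mathbf{b}_2$ strictly containing the other, or the two incomparable --- forces violation of (2), (1), or (3) respectively. Hence at most two pairwise disjoint related pairs are possible, with at most one in each direction.

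The main obstacle I anticipate is the coordinate-level arithmetic bookkeeping: because the channel sums over the integers rather than in a group, the positive-part construction is needed to keep adversary inputs in $\{0,1\}^n$, and the output coincidence must be verified entrywise in each case. The dual containment direction in the second part (pairs where a $C_2$-element is contained in a $C_1$-element) is handled symmetrically by relabeling $C_1$ and $C_2$, so the heavy lifting is in the single-direction construction sketched above.
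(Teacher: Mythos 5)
Your proof is correct and follows essentially the same route as the paper's: a collision of two transmissions built from a common codeword to force disjointness, the antichain property of Theorem \ref{thm:vec_supps} to place each related pair across the two codebooks, crossed transmissions to rule out two same-direction pairs, and pigeonhole to rule out three. The only cosmetic difference is your use of positive-part state vectors (which spreads the contradiction across conditions (1)--(3) of Lemma \ref{lem:A_char}), where the paper uses the full codewords or difference vectors as states and lands directly in condition (3).
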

\begin{proof}
Let $(C_1,C_2)$ be a $0.5$ partially correcting codebook pair with zero probability of $\gamma$ partial correction error over $W_{2,1}^{+}$ and $|C_1|,|C_2|>1$.
To show that the codebooks are disjoint, suppose $\x\in C_1\cap C_2$. Let $\mathbf{a}\neq \x\in C_1$ and $\mathbf{b}\neq \x\in C_2$. Note that $\mathbf{a},\mathbf{b},\x\in \mathcal{S}^{n}$. Observe that if either [$\x_1=\x$, $\x_2=\mathbf{b}$, $\s=\mathbf{a}$], or [$\x_1=\mathbf{a}$, $\x_2=\x$, $\s=\mathbf{b}$], the channel output is $\x+\mathbf{b}+\mathbf{a}$. This violates condition (3) of Lemma \ref{lem:A_char}. We have thus established that $C_1\cap C_2=\emptyset$.

Next, we will show that if there are two related pairs in the union, they have a particular structure.
Let $\x_1,\x_2,\y_1,\y_2\in C_1\sqcup C_2$ be distinct such that $\x_1\leq \x_2$ and $\y_1\leq \y_2$. Making use of Theorem \ref{thm:vec_supps}, and without loss of generality, we have two cases: either $\x_1,\y_1\in C_1$, or $\x_1,\y_2\in C_1$.
    
Suppose $\x_1,\y_1\in C_1$ and $\x_2,\y_2\in C_2$. Let $\mathbf{a},\mathbf{b}\in \mathcal{S}^{n}$ be such that $\x_1+\mathbf{a}=\x_2$ and $\y_1+\mathbf{b}=\y_2$.
Then,
\begin{equation}
    \x_1+\y_2+\mathbf{a}=\x_2+\y_2 =
    \y_1+\x_2+\mathbf{b},
\end{equation}
contradicting condition $(3)$ of Lemma \ref{lem:A_char}.
Thus, it must be the case that the two smaller elements, $\x_1$ and $\y_1$, must be in distinct codebooks.

Now suppose that $\x_1,\x_2,\y_1,\y_2,\z_1,\z_2\in C_1\sqcup C_2$ are distinct elements with $\x_1\leq \x_2, \ \y_1\leq\y_2, \ \z_1\leq\z_2$. Based on our above structural argument, $\x_1$, $\y_1$, and $\z_1$ must pairwise belong to different codebooks, an impossibility.
\end{proof}

Theorem \ref{thm:potpourri} implies an upper bound on the size of the disjoint union:

\begin{thm}
Let $(C_1,C_2)$ be a $0.5$ partially correcting codebook pair with zero probability of $\gamma$ partial correction error over $W_{2,1}^{+}$, with $|C_1|,|C_2|>1$. Then $|C_1\sqcup C_2|\leq \binom{n}{\lceil\nicefrac{n}{2}\rceil}+2$.
\end{thm}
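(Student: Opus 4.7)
The plan is to invoke Sperner's theorem after using the structural constraints on $C_1 \sqcup C_2$ established in the preceding results. Recall that each $C_j$ is individually an antichain in the Boolean lattice $\mathcal{P}([n])$ by Theorem \ref{thm:vec_supps}, that $C_1 \cap C_2 = \emptyset$ by Theorem \ref{thm:potpourri}, and that $C_1 \sqcup C_2$ has at most two related pairs of codewords, where moreover these two pairs share no common element (again by Theorem \ref{thm:potpourri}).

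The key observation is that we can convert $C_1 \sqcup C_2$ into a genuine antichain by removing at most two elements. Specifically, let $\{\x_1, \x_2\}$ and $\{\y_1, \y_2\}$ denote the (at most two) related pairs in $C_1 \sqcup C_2$, with $\x_1 \leq \x_2$ and $\y_1 \leq \y_2$. Since these pairs are disjoint, the four vectors $\x_1, \x_2, \y_1, \y_2$ are distinct. I would then consider the set
\begin{equation}
    T := (C_1 \sqcup C_2) \setminus \{\x_2, \y_2\}.
\end{equation}
I claim $T$ is an antichain in $\mathcal{P}([n])$. Indeed, any two related elements in $C_1 \sqcup C_2$ must constitute one of the enumerated related pairs (since there are no related pairs within a single $C_j$ by Theorem \ref{thm:vec_supps}, and by assumption there are at most two related pairs across the codebooks). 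Removing the larger element of each such pair destroys the only possible comparabilities, so $T$ is unrelated.

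By Sperner's theorem, $|T| \leq \binom{n}{\lceil n/2 \rceil}$, and hence
\begin{equation}
    |C_1 \sqcup C_2| \leq |T| + 2 \leq \binom{n}{\lceil n/2 \rceil} + 2,
\end{equation}
as desired. The proof requires no real calculation beyond Sperner's bound; the only care needed is to confirm that the two related pairs are disjoint as sets of vectors (so that $\{\x_2, \y_2\}$ contains two distinct elements to remove), and this is guaranteed by Theorem \ref{thm:potpourri}. If there are fewer than two related pairs, the bound only improves, so the inequality holds in full generality.
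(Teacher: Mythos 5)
Your proof is correct and takes essentially the same route as the paper's: use Theorem \ref{thm:potpourri} to see that $C_1\sqcup C_2$ is an antichain plus at most two additional elements, then apply Sperner's theorem. You are merely more explicit than the paper about which two elements to delete (the larger element of each related pair) and why the remainder is an antichain, which is a welcome level of detail but not a different argument.
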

\begin{proof}
    From Sperner's theorem, the size of a largest antichain in $\mathcal{P}([n])$ is $\binom{n}{\lceil\nicefrac{n}{2}\rceil}$. From Theorem \ref{thm:potpourri}, $C:=C_1\sqcup C_2$ will have the form of an antichain with at most two additional elements. The result follows. 
\end{proof}

\subsection{Three or more users}

With the goal of constructing short codebook tuples with zero probability of $\gamma$ partial correction error over $W_{t,\ell}^{+}$ for $t\geq 3$, in this section we develop equivalent conditions for (1) and (3) of Lemma
\ref{lem:A_char} which are easier to check computationally. In particular, we will reinterpret these conditions in terms of differences of sums of legitimate codewords. This allows us to avoid actually constructing $A_1$, and to instead check conditions on the elements of $A_0$ (i.e. sums of codewords).

In each of the results of this section, we will consider two such sums: $\mathbf{u}=\sum_{j=1}^{t}\x_{j}$ and $\mathbf{v}=\sum_{j=1}^{t}\y_{j}$, where (not necessarily distinct) $\x_j,\y_j\in C_j$ are vectors of length $n$ for each $j\in[t]$.

\begin{lem}
\label{lem:cond_1_rephrase}
   Let $(C_1,\ldots,C_t)$ be a codebook tuple for the channel $W_{t,\ell}^+$, where $\ell \geq 1$. There exist distinct $\mathbf{u,v}\in A_0$ such that $\mathbf{u}-\mathbf{v}$ is in $\{0,\ldots,\ell\}^{n}=\mathcal{S}^{n}$ if and only if  $A_0 \cap A_1 \neq \emptyset$.
\end{lem}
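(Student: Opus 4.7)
The plan is to prove both directions of the equivalence by directly manipulating the defining expressions for $A_0$ and $A_1$, exploiting the additive structure of the channel $W_{t,\ell}^+$: over this channel, every element of $A_0$ differs from every element of $A_1$ only by an additive state vector $\mathbf{s}\in\mathcal{S}^n\setminus\{\mathbf{s}_0\}$, and membership in $\mathcal{S}^n$ is simply coordinatewise membership in $\{0,1,\ldots,\ell\}$.

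For the forward direction, I would start by assuming that distinct $\mathbf{u},\mathbf{v}\in A_0$ exist with $\mathbf{u}-\mathbf{v}\in\mathcal{S}^n$. Write $\mathbf{u}=\sum_{j=1}^t \mathbf{x}_j$ and $\mathbf{v}=\sum_{j=1}^t \mathbf{y}_j$ with $\mathbf{x}_j,\mathbf{y}_j\in C_j$. Setting $\mathbf{s}:=\mathbf{u}-\mathbf{v}$, one sees $\mathbf{s}\in\mathcal{S}^n$ by assumption, and $\mathbf{s}\neq\mathbf{s}_0=\mathbf{0}$ since $\mathbf{u}\neq\mathbf{v}$. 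Then $\mathbf{u}=\mathbf{s}+\sum_{j=1}^t \mathbf{y}_j$ places $\mathbf{u}$ in $A_1$; since $\mathbf{u}\in A_0$ as well, $A_0\cap A_1\neq\emptyset$.

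For the reverse direction, take $\mathbf{w}\in A_0\cap A_1$. By definition of $A_0$ there exist $\mathbf{x}_j\in C_j$ with $\mathbf{w}=\sum_{j=1}^t \mathbf{x}_j$, and by definition of $A_1$ there exist $\mathbf{y}_j\in C_j$ and $\mathbf{s}\in\mathcal{S}^n\setminus\{\mathbf{s}_0\}$ with $\mathbf{w}=\mathbf{s}+\sum_{j=1}^t \mathbf{y}_j$. Setting $\mathbf{u}:=\mathbf{w}\in A_0$ and $\mathbf{v}:=\sum_{j=1}^t \mathbf{y}_j\in A_0$, I get $\mathbf{u}-\mathbf{v}=\mathbf{s}\in\mathcal{S}^n$, and $\mathbf{u}\neq\mathbf{v}$ because $\mathbf{s}\neq\mathbf{0}$.

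There is no real obstacle here; the lemma is essentially a definitional restatement that hinges on the additive nature of $W_{t,\ell}^+$ and on identifying ``$\mathbf{s}\neq\mathbf{s}_0$'' with ``$\mathbf{u}\neq\mathbf{v}$.'' The only point requiring mild care is ensuring that the witness pair $(\mathbf{u},\mathbf{v})$ in the reverse direction genuinely lies in $A_0\times A_0$, which follows immediately by choosing $\mathbf{v}$ to be the legitimate portion of the $A_1$ representation of $\mathbf{w}$.
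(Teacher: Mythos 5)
Your proof is correct and takes essentially the same route as the paper's: in both directions you identify the difference $\mathbf{u}-\mathbf{v}$ with a nonzero state vector $\mathbf{s}$ and use the additive form of $W_{t,\ell}^+$ to move between $A_0$ and $A_1$. You are in fact slightly more careful than the paper in explicitly verifying distinctness of $\mathbf{u}$ and $\mathbf{v}$ via $\mathbf{s}\neq\mathbf{0}$.
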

\begin{proof}
 Suppose the vector difference $\mathbf{u}-\mathbf{v}\in \{0,\ldots,\ell\}^n$ for some choice of $\mathbf{u,v}\in A_0$.  In this case,  $\mathbf{u}-\mathbf{v}$ is an element in $\mathcal{S}^n$ not equal to $\s_0$; call this difference $\s$. Then, $\u=\v+\s$. We then see that $ \u\in A_1$ and $\u \in A_0$. This means that $A_0 \cap A_1 \neq \emptyset$.
 On the other hand, let $\mathbf{u}$ be an element of nonempty $ A_{0}\cap A_{1}$. Since $\u\in A_1$, it must be the case that $\mathbf{u}=\mathbf{v}+\mathbf{s}$ for some $\mathbf{v}\in A_0$, $\s\neq \s_0$. Then, $\mathbf{u}-\mathbf{v}\in\mathcal{S}^{n}$, and we are done.
\end{proof}

The following two examples illustrate Lemma \ref{lem:cond_1_rephrase}.

\begin{example}
\label{ex:bad_ex_1}
  Consider the set of three codebooks with block length $n=6$ given below:
    \begin{align}
       C_1 &= \{100110, 110110\} \\
       C_2 &=\{111010, 100101\} \\
       C_3 &=\{011111, 001010\}
    \end{align}
    We claim that $A_0\cap A_1\neq\emptyset$ over $W_{3,\ell}^{+}$ for this codebook. Let $\u,\v\in A_0$ be given by  \begin{align}
       \u &= 100110+111010+011111 = 222231\\
       \v &= 110110+100101+001010=211221
    \end{align}    
    Then, $\u-\v=222231-211221=011010$.
    Here, we can see that $\u-\v$ is an element in $\mathcal{S}^n$ for any $\ell \geq 1$. Observe that $\u=\v+\s$ when $\s=011010$, so that $A_0 \cap A_1 \neq \emptyset$ and condition (1) of Lemma \ref{lem:A_char} fails.
\end{example}
\begin{example}
\label{ex:good_ex_1}
  Consider the set of three codebooks with block length $n=6$ given below:
    \begin{align}
       C_1 &= \{011010, 100101\} \\
       C_2 &=\{010110, 101001\} \\
       C_3 &=\{001101, 110010\}
    \end{align}
    We claim that $A_0\cap A_1=\emptyset$ over $W_{3,\ell}^{+}$ for this codebook. Consider the following choice of $\u$ and $\v$ as an example:
    \begin{align}
        \u &=011010 + 010110 + 001101= 022221 \\
        \v &=100101 + 101001 + 110010= 311112
    \end{align} Then, 
    \begin{align}
        \u-\v &=022221-311112=(-3)1111(-1)
    \end{align} Here we can see that $\u-\v$ is not an element in $\mathcal{S}^n$.
    All values of $\u-\v$ can be looped through for this channel to show that $A_0 \cap A_1 = \emptyset$.
\end{example}

Next, we rephrase condition (3) of Lemma \ref{lem:A_char} in terms of elements of $A_{0}$.

\begin{lem}
\label{lem:cond_3_rephrase}
 Let $(C_1,\ldots,C_t)$ be a codebook tuple for the channel $W_{t,\ell}^+$, where $\ell \geq 1$, and let $\gamma \in (0,1)$. Let $\mathbf{u,v}\in A_0$ such that $\u=\sum_{j=1}^{t}\x_j$ and $\v=\sum_{j=1}^{t}\y_j$ and $\x_j\neq \y_j$ for at least $t-\lceil \gamma t\rceil +1$ values of $j\in[t]$. If the maximum entry of $|\u-\v|$ is at most $\ell$, condition (3) of Lemma \ref{lem:A_char} fails. 
\end{lem}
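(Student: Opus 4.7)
The plan is to exhibit a single element $\mathbf{w} \in A_1$ admitting two distinct decompositions $\mathbf{w} = \s + \sum_{j=1}^t \x_j = \s' + \sum_{j=1}^t \y_j$ with both $\s, \s' \in \mathcal{S}^n \setminus \{\mathbf{s}_0\}$, which between them violate the common-support requirement in condition (3). Setting $\mathbf{d} := \u - \v$, the two decompositions agree exactly when $\s - \s' = -\mathbf{d}$, and since each $|\mathbf{d}_i| \leq \ell$ by hypothesis, such $\s, \s' \in \{0,1,\ldots,\ell\}^n$ can be produced coordinate by coordinate.

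The natural choice is $\s := \mathbf{d}^-$ and $\s' := \mathbf{d}^+$, the negative and positive parts of $\mathbf{d}$, which satisfy $\s' - \s = \mathbf{d}$, lie in $\mathcal{S}^n$ since $|\mathbf{d}_i|\leq \ell$, and give $\mathbf{w} := \u + \s = \v + \s'$. The assumption that $\x_j \neq \y_j$ for at least $t - \lceil \gamma t\rceil + 1$ indices $j$ is equivalent to saying that the set of indices where the two decompositions agree has size at most $\lceil \gamma t \rceil - 1$. Consequently, every subset $J \subseteq [t]$ of size $\lceil \gamma t \rceil$ must contain some $i$ with $\x_i \neq \y_i$, which is precisely the negation of condition (3) of Lemma~\ref{lem:A_char} applied to this $\mathbf{w}$.

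The main obstacle is ensuring that both $\s$ and $\s'$ are genuinely nonzero so that each representation lies in $A_1$ rather than slipping into $A_0$. When $\mathbf{d}$ contains both positive and negative entries this is automatic from the minimal choice, and when some coordinate satisfies $|\mathbf{d}_i| < \ell$ one can add a common slack vector $\mathbf{e} \in \mathcal{S}^n$ with $\mathbf{e}_i \leq \ell - |\mathbf{d}_i|$ and $\mathbf{e}_{i_0} \geq 1$ for one such $i_0$ to both $\s$ and $\s'$ without disturbing the argument above. The only corner case is $\mathbf{d} = \pm \ell \cdot \mathbf{1}$, where $\u$ or $\v$ itself already lies in $A_0 \cap A_1$; I would either handle this by a slight sharpening of the statement or absorb it into a simultaneous failure of condition (1).
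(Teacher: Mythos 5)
Your proof takes essentially the same route as the paper's: split $\u-\v$ into its negative and positive parts to obtain $\s,\s'\in\mathcal{S}^n$ with $\u+\s=\v+\s'$, then observe that the two decompositions disagree on more than $t-\lceil\gamma t\rceil$ users, so no size-$\lceil\gamma t\rceil$ subset $J$ can satisfy condition (3). Your extra care about the degenerate cases where one of $\s,\s'$ is the zero state (so that one representation lands in $A_0$ rather than $A_1$) is a refinement the paper's proof silently skips, and your fixes --- adding a common slack vector, or deferring to a failure of condition (1) when $\u-\v=\pm\ell\cdot\mathbf{1}$ --- are sound.
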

\begin{proof}
Let $(C_1,\ldots,C_t)$ be a codebook tuple for the channel $W_{t,\ell}^+$, where $\ell \geq 1$, and let $\gamma \in(0,1)$. Let $\mathbf{u,v}\in A_0$ such that $\u=\sum_{j=1}^{t}\x_j$ and $\v=\sum_{j=1}^{t}\y_j$ and $\x_j\neq \y_j$ for at least $t-\lceil \gamma t\rceil +1$ values of $j\in[t]$.
   If the maximum entry of $|\u-\v|$ is $ \leq \ell$,  then $|\u-\v| \in \mathcal{S}^n$. Thus, $|\u-\v|=\s$ and $u_i-v_i=\pm s_i$ for each $i\in [n]$. Letting $\s_1$ equal $\s$ on coordinates where $u_i-v_i$ is negative, and zero elsewhere, and  $\s_2=\s$ when $u_i-v_i$ is positive, and zero elsewhere, we have $\u+\s_1 = \v+\s_2$. Condition (3) fails because the sums cannot match on any subset of $\lceil \gamma t\rceil$ codewords.
  
\end{proof}

The following example illustrates Lemma \ref{lem:cond_3_rephrase}. 
\begin{example}
  Consider the set of three codebooks with block length $n=6$ given below:    
  \begin{align}
        C_1 &= \{011010, 100101\}\\
        C_2 &=\{010110, 101001\}\\
        C_3 &=\{010101, 101010\}
    \end{align}
    We claim that condition (3) of Lemma \ref{lem:A_char} fails over $W_{3,\ell}^{+}$ when $\gamma = \frac{1}{3}$ or $\gamma =\frac{2}{3}$. Let $\u,\v\in A_0$ be 
    \begin{align}
        \u &= 100101+010110+101010=211221\\
        \v &= 011010+101001+010101=122112
    \end{align}
    It can be seen that $\u$ and $\v$ differ on all three users. 
    \begin{align}
        |\u-\v| &= |211221-122112|=111111
    \end{align}
    and $\u-\v=1(-1)(-1)11(-1)$
such that 
$\u+011001=\v+100110$.
Thus, condition (3) fails due to the fact that there is a repeated element in $A_{1}$ for which no user codewords match.
\end{example}

We observe that condition (2) of Lemma \ref{lem:A_char} is straightforward to check on $A_0$ alone. Combining all checks on $A_0$ described in this section,
we can algorithmically loop through the possible combinations of differences of elements of $A_0$ to test whether a codebook triple is a candidate for partial correction with zero probability of $\gamma$ partial correction error. Notably, the check of Lemma \ref{lem:cond_3_rephrase} is necessary (for some such $\u,\v$) for failure of condition (3) when there are three users and $\gamma=\frac{2}{3}$. In fact, the codebook given in Example \ref{ex:good_ex_1} is a good codebook triple for $W_{3,1}^{+}$ with $\gamma=\frac{2}{3}$. The extension scheme of Section \ref{sec:extension_scheme} can thus be used to achieve positive rate triples with arbitrary block length.
\section{Conclusion}
\label{sec:conclusion}
In this paper, we gave necessary (non-)symmetrizability and (non-)overwritability conditions for nonempty interior of the $\gamma$ partially correcting authentication capacity region over a $t$-user AV-MAC. We presented a scheme to extend the block length of a strong short block length code, showing that the resulting extension can maintain the positive rates of the short code. Finally, we examined the particular AV-MAC denoted $W_{t,\ell}^{+}$, deriving structural results and bounds for zero $\gamma$ partial correction error codes over this channel. Ongoing and future directions include sufficiency of the aforementioned necessary channel conditions for partial correction, refinement of our block length extension scheme, and alternative paths toward inner bounds on the $\gamma$ partially correcting authentication capacity region.

\section*{Acknowledgment}
We would like to thank Shuqi (Ariel) Liu for her previous work on establishing a version of Lemma \ref{lem:A_char} for $W_{2,1}^{+}$ and the disjointness portion of Theorem \ref{thm:potpourri}.

\bibliographystyle{IEEEtran}
\bibliography{bib}

\end{document}